\newcommand{\dffack}{Supported by \emph{DFF $\mid$ Natural Sciences} International Postdoctoral Fellowship 0131-00025B}
\newcommand{\epsrcack}{Supported by EPSRC Fellowship EP/R044759/1}
\newcommand{\cdtack}{Supported by EPSRC grant EP/L01503X/1 via the CDT in Pervasive Parallelism}
\newcommand{\id}{\mathrm{id}}
\newcommand{\opp}{\mathrm{op}}
\newcommand{\tot}{\xrightarrow}
\newcommand{\im}{\mathrm{im}}
\newcommand{\medeq}[1]{\sim_{{#1}_\oplus}}
\newcommand{\cD}{\mathcal{D}}
\newcommand{\cE}{\mathcal{E}}
\newcommand{\cF}{\mathcal{F}}
\newcommand{\cat}[1]{\ensuremath{\mathbf{#1}}}
\newcommand{\C}{\cat{C}}
\newcommand{\D}{\cat{D}}
\newcommand{\DagMon}{\mathrm{DagMon}}
\newcommand{\FinBij}{\cat{FinBij}}
\newcommand{\FinPInj}{\cat{FinPInj}}
\newcommand{\Unitary}{\cat{Unitary}}
\newcommand{\Isometry}{\cat{Isometry}}
\newcommand{\coIsometry}{\cat{coIsometry}}
\newcommand{\Contraction}{\cat{Contraction}}
\newcommand{\FHilbCPTP}{\cat{FHilb}_{\mathrm{CPTP}}}
\newcommand{\FHilbCPTN}{\cat{FHilb}_{\mathrm{CPTN}}}
\newcommand{\FCstarCPTN}{\cat{FCstar}_{\mathrm{CPTN}}}
\newcommand{\La}{\ensuremath{L_\oplus}}
\newcommand{\Ra}{\ensuremath{R_\oplus}}
\newcommand{\LRa}{\ensuremath{{LR}_\oplus}}
\newcommand{\Lm}{\ensuremath{L_\otimes}}
\newcommand{\Lt}{\ensuremath{\Lm^t}}
\newcommand{\SplitM}{\ensuremath{\mathbf{Split}^{\discard}}}
\newcommand\discard{%
\mathbin{\text{\begin{tikzpicture}[circuit ee IEC,yscale=0.9,xscale=0.6,rotate=180]
\draw (0,2ex) to (0,0) node[ground,rotate=-90,xshift=.65ex] {};
\end{tikzpicture}}}%
}
\newcommand{\B}{\mathcal{B}}
\newtheorem{theorem}{Theorem}
\newtheorem{proposition}[theorem]{Proposition}
\newtheorem{lemma}[theorem]{Lemma}
\newtheorem{corollary}[theorem]{Corollary}
\theoremstyle{definition}
\newtheorem{definition}[theorem]{Definition}
\newtheorem*{rep@theorem}{\rep@title}
\newcommand{\newreptheorem}[2]{%
\newenvironment{rep#1}[1]{%
 \def\rep@title{#2 \ref{##1}}%
 \begin{rep@theorem}}%
 {\end{rep@theorem}}}
\tikzstyle{scalar}=[fill=white, draw=black, shape=circle, tikzit fill=white, tikzit draw=black]
\tikzstyle{map}=[fill=white, draw=black, shape=rectangle, tikzit fill=white, tikzit draw=black, minimum width=4em]
\tikzstyle{bullet}=[fill=black, draw=none, shape=circle, tikzit fill=black, inner sep=1pt]
\tikzstyle{map1}=[fill=white, draw=black, shape=rectangle, tikzit fill=white, tikzit draw=black, minimum width=2em]
\tikzstyle{state}=[fill=white, draw=black, tikzit fill=white, tikzit draw=black, regular polygon, regular polygon sides=3, shape border rotate=180]
\tikzstyle{map3}=[fill=white, draw=black, shape=rectangle, tikzit fill=white, tikzit draw=black, minimum width=6em]
\tikzstyle{circle}=[fill=white, draw=black, shape=circle, tikzit fill=white, tikzit draw=black, inner sep=1pt]
\tikzstyle{label}=[font={\tiny}]
\tikzstyle{effect}=[fill=white, draw=black, shape=circle, tikzit fill=white, tikzit draw=black, regular polygon, regular polygon sides=3]
\tikzstyle{clone}=[fill=black, draw=black, shape=circle, inner sep=2pt]
\tikzstyle{map4}=[fill=white, draw=black, shape=rectangle, tikzit fill=white, tikzit draw=black, minimum width=9em]
\tikzstyle{invis}=[-, dashed, dash pattern=on 0.5mm off 0.5mm]
\begin{document}

\title{Universal Properties of Partial Quantum Maps}
\author{Pablo Andrés-Martínez\footnote{\cdtack}
\institute{University of Edinburgh \\ Quantinuum}
\and
Chris Heunen\footnote{\epsrcack}
\institute{University of Edinburgh}
\and 
Robin Kaarsgaard\footnote{\dffack}
\institute{University of Southern Denmark}
}

\def\titlerunning{Universal Properties of Partial Quantum Maps}
\def\authorrunning{P. Andrés-Martínez, C. Heunen, \& R. Kaarsgaard}

\maketitle

\begin{abstract}
  We provide a universal construction of the category of finite-dimensional
  C*-algebras and completely positive trace-nonincreasing maps from the rig
  category of finite-dimensional Hilbert spaces and unitaries. This
  construction, which can be applied to any dagger rig category, is described 
  in three steps, each associated with their own universal property, and draws
  on results from dilation theory in finite dimension. In this way, we
  explicitly construct the category that captures hybrid quantum/classical
  computation with possible nontermination from the category of its reversible
  foundations. We discuss how this construction can be used in the design and
  semantics of quantum programming languages.
\end{abstract}

\section{Introduction}
The account of quantum measurement offered by \emph{decoherence} establishes that the irreversible nature of mixed-state evolution occurs when a system is considered in isolation from its environment. When the environment is brought back into view, mathematically through techniques such as \emph{quantum state purification} and \emph{Stinespring dilation}, the reversible underpinnings of mixed-state evolution are exposed.

This perspective has in recent years led to the study of quantum theory through categorical completions of its reversible foundations, the category of finite-dimensional Hilbert spaces and unitaries, demonstrating connections between \emph{universal} constructions and effectful quantum programming~\cite{heunenkaarsgaard:qie}. 
This article constructs in a universal way the category of finite-dimensional C*-algebras and \emph{partial quantum channels} (completely positive trace-nonincreasing maps) from the rig category of finite-dimensional Hilbert spaces and unitaries. The construction has three stages, each with a universal property of its own.
\begin{itemize}
  \item Freely allowing \emph{partiality respecting the dagger structure} 
  (by making the additive unit a \emph{zero object}) allows
  contractive maps to be described by unitaries through \emph{Halmos
  dilation}~\cite{halmos:normal,robinson:julia,levyshalit:dilationsfd}.
  \item Freely allowing the \emph{hiding of states} in a way that
  \emph{respects partiality} (by making the multiplicative unit \emph{terminal}
  for \emph{total} maps) allows completely positive trace-nonincreasing maps to
  be described through contractions, using a variant of \emph{Stinespring
  dilation}~\cite{stinespring:positive}. This construction has an interesting
  universal property as a pushout of monoidal categories.
  \item Freely splitting certain idempotents on finite-dimensional
  Hilbert spaces yields finite-dimensional C*-algebras, which describe \emph{hybrid quantum/classical} computation.
\end{itemize}
All three universal constructions are abstract and apply to any suitably structured category. 
They show that the traditional model of C*-algebras inevitably arises from the mere concepts of quantum circuits, partiality, hiding, and classical communication, without any concept of \textit{e.g.}\ norm. 
Thus they inform the design of quantum programming languages~\cite{heunenkaarsgaard:qie}, as part of a highly effective broader approach to program semantics from universal properties~\cite{wadler:free,hutton:tutorial,statonlevy:universal}.

\paragraph{Related work}
The role of universal properties and categorical completions in quantum theory,
in particular the \emph{monoidal indeterminates}
construction~\cite{hermidatennent:indeterminates}, has been studied in recent
years~\cite{rennelastatonfurber:infinite, huotstaton:universal,
huotstaton:completion, comfort:zxamp, heunenkaarsgaard:bennett,
heunenkaarsgaard:qie} (see also \cite{steinstaton:compositional} for a related
approach in the probabilistic case, and \cite{westerbaan:paschke,
parzygnat:stinespring} for other accounts of dilations as universal
properties). The role of partiality in effectus theory was studied in
\cite{cho:partial}. We deepen the connections between dilations and universal
properties of functors between categories of quantum systems first observed in
\cite{huotstaton:universal,huotstaton:completion}. A direct connection between
universal constructions and the design and semantics of quantum programming
languages with effects was demonstrated in \cite{heunenkaarsgaard:qie}. The
idempotent splitting of the category of finite-dimensional Hilbert spaces and
completely positive maps, in particular the fact that it contains all
finite-dimensional C*-algebras, has been the subject of study and discussion in
\cite{heunenkissingerselinger:cpproj,rennelastatonfurber:infinite,coeckeselbytull:tworoads}.

\paragraph{Overview}
Section~\ref{sec:affine_completion} recalls some facts about rig categories and their additive affine completion and relation to partiality, which we extend in Section~\ref{sec:biaffine_completion} to the biaffine completion and dagger partiality, and show that the biaffine completion of the category $\Unitary$ of (finite-dimensional Hilbert spaces and) unitaries is precisely $\Contraction$ of contractive maps. Section~\ref{sec:generalized_pablo_pushout} describes a construction that completes $\Contraction$ to the category $\FHilbCPTN$ of \emph{partial quantum channels} (completely positive trace-nonincreasing maps) using a variant of Stinespring dilation, and we show that this construction satisfies a universal property as a pushout in the category of monoidal categories. Finally, Section~\ref{sec:splitting_measurements} shows that splitting along a particular class of idempotents, corresponding in $\FHilbCPTN$ to \emph{measurements}, completes $\FHilbCPTN$ to the category $\FCstarCPTN$ of finite-dimensional C*-algebras and partial quantum channels. We end in Section~\ref{sec:discussion} with a discussion of the applications of these constructions to the design and semantics of quantum programming languages.

\section{The additive affine completion of rig categories}
\label{sec:affine_completion}

We enter the story assuming that the reader is familiar with monoidal categories and dagger categories~\cite{heunenvicary:cqt}, and proceed with preliminaries about rig categories and their additive affine completion.

A \emph{rig category} is a category which is symmetric monoidal in two different ways, such that one monoidal product distributes over the other, subject to a large amount of coherence equations~\cite{laplaza:coherence}. In analogy with the situation in Hilbert spaces, we usually write these monoidal products as $(\otimes, I)$ (the ``tensor product'') and $(\oplus, O)$ (the ``direct sum'') with $\otimes$ distributing (up to natural isomorphism) over $\oplus$ via distributors
$\delta^L \colon A \otimes (B \oplus C) \to (A \otimes B) \oplus (A \otimes C)$ and $
\delta^R \colon (A \oplus B) \otimes C \to (A \otimes C) \oplus (B \otimes C)$
and annihilators (``nullary distributors'')
$\delta_0^L \colon O \otimes A \to O$ and $\delta_0^R \colon A \otimes O \to O.$
A \emph{dagger rig category} is a dagger category with a rig structure such that all coherence isomorphisms are unitary (\textit{i.e.}, satisfy $f^{-1} = f^\dagger$). Natural examples of rig categories are \emph{distributive categories} where $\otimes$ is a categorical product with a terminal object as its monoidal unit and $\oplus$ is a categorical coproduct with an initial unit. However, not all rig categories are of this form: the category $\Unitary$ of finite-dimensional Hilbert spaces and unitaries (with tensor product and direct sum) and the category $\FinBij$ of finite sets and bijections (with cartesian product and disjoint union) are both (dagger) rig categories, but neither has products or coproducts.

\subsection{Partiality and the additive affine completion}

What is the appropriate notion of \emph{partiality} for a given rig category? If coproducts and a terminal object are available the \emph{lift monad} $(-)+1$ can answer this question, but not every rig category has these. Instead, we can think of a partial map $A \to B$ as a map $A \to B \oplus E$, \textit{i.e.}, extend the output state space with an extra part $E$ to receive all the inputs we wish to be undefined. Any map $f \colon A \to B$ can be lifted
to a total map $\rho_\oplus^{-1} \circ f \colon A \to B \oplus O$ using the inverse right unitor, and partial maps $f \colon A \to B
\oplus E$ and $g \colon B \to C \oplus E'$ can be composed by composing their defined parts, i.e., $\alpha_\oplus \circ g \otimes \id_E \circ f \colon A \to C
\oplus (E' \oplus E)$. This is an information-preserving variant of Kleisli-composition for the lift monad.

This describes the \emph{additive affine completion} of a rig category, barring one detail: given that $E$ describes where the map is undefined, it shouldn't actually matter how we represent this particular part. For example, given some partial map $f \colon A \to B \oplus E$ and some manipulation $m \colon E \to E'$ of the undefined part, $f$ and $(\id \oplus m) \circ f$ morally describe the same partial map. Therefore, given morphisms $f \colon A \to B \oplus E $ and $f' \colon A \to B \oplus E'$, we write $f \le_L f'$ if and only if there exists some \emph{mediator} $m \colon E \to E'$ such that
% https://q.uiver.app/?q=WzAsMyxbMSwwLCJBIl0sWzAsMSwiQiBcXG9wbHVzIEUiXSxbMiwxLCJCIFxcb3BsdXMgRSciXSxbMCwxLCJmIiwyXSxbMCwyLCJmJyJdLFsxLDIsIlxcaWQgXFxvcGx1cyBtIiwyXV0=
\[\begin{tikzcd}[ampersand replacement=\&]
	\& A \\
	{B \oplus E} \&\& {B \oplus E'}
	\arrow["f"', from=1-2, to=2-1]
	\arrow["{f'}", from=1-2, to=2-3]
	\arrow["{\id \oplus m}"', from=2-1, to=2-3]
\end{tikzcd}\]
commutes. This straightforwardly gives a preorder, though not (necessarily) an equivalence relation since mediators need not be invertible. However, since we would like momentarily to treat it as an equivalence, we consider instead its equivalence closure $\sim_L$, \textit{i.e.},, the least equivalence relation containing $\le_L$.
\begin{definition}\label{def-affine-comp}
  Given a rig category $\cat{C}$, its \emph{additive affine completion} 
  $\La(\cat{C})$ is the category whose
  \begin{itemize}
    \item objects are those of $\cat{C}$,
    \item morphisms $[f,E] \colon A \to B$ are pairs of an object $E$ and an
    equivalence class of morphisms $f \colon A \to B \oplus E$ of $\cat{C}$
    under $\sim_L$,
    \item identities $A \to A$ are $[\rho_\oplus^{-1}, O]$ (with $\rho_\oplus
    \colon A \oplus O \to A$ the right unitor), and
    \item composition of $[f,E] \colon A \to B$ and $[g,E'] \colon B \to C$ is 
    $[\alpha_\oplus \circ g \otimes \id_E \circ f, E' \oplus E]$.
  \end{itemize}
\end{definition}
There is a dual to this construction, the \emph{additive coaffine completion} $\Ra(\cat{C})$, defined as $\La(\cat{C}^\opp)^\opp$. Explicitly, morphisms $A \to B$ in $\Ra(\cat{C})$ are equivalence classes of morphisms $A \oplus E \to B$, and so hide part of their \emph{source} space rather than their \emph{target} space. We summarise some features of these categories.

\begin{proposition}
  When $\cat{C}$ is a rig category, so are $\La(\cat{C})$ and $\Ra(\cat{C})$.
\end{proposition}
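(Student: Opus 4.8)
The plan is to equip $\La(\cat{C})$ with two symmetric monoidal structures directly, transport the coherence data from $\cat{C}$ along the canonical lift of total maps, and then reduce the Laplaza coherence equations to those already holding in $\cat{C}$; the claim for $\Ra(\cat{C})$ will follow by duality. On objects both products agree with those of $\cat{C}$, retaining the units $I$ and $O$. On morphisms, given $[f,E]\colon A \to B$ and $[g,E']\colon C \to D$, I would define the direct sum $[f,E] \oplus [g,E']$ to be $f \oplus g \colon A \oplus C \to (B \oplus E) \oplus (D \oplus E')$ postcomposed with the braiding-and-associativity isomorphism regrouping the target as $(B \oplus D) \oplus (E \oplus E')$, so that the error object is $E \oplus E'$. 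For the tensor, $[f,E] \otimes [g,E']$ is $f \otimes g \colon A \otimes C \to (B \oplus E) \otimes (D \oplus E')$ followed by the iterated distributors $\delta^L,\delta^R$ that expand the target into $(B \otimes D) \oplus \big((E \otimes D) \oplus (B \otimes E') \oplus (E \otimes E')\big)$, the last three summands forming the new error object.

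First I would check that these assignments descend to $\sim_L$-classes. If $f \le_L f'$ via a mediator $m$, then the two direct sums are related by the mediator $m \oplus \id_{E'}$, and the two tensors by $m$ applied within each of the three cross-terms of the tensor error object; both facts follow from naturality of the braiding, associator, and distributors in $\cat{C}$. Since $\sim_L$ is the equivalence closure of $\le_L$, well-definedness on classes follows. Next I would verify bifunctoriality, namely preservation of identities and of the affine composition of Definition~\ref{def-affine-comp}, again by commuting the structural isomorphisms past the error-collecting maps using their naturality squares.

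The coherence data (associators, unitors, braidings, the distributors $\delta^L,\delta^R$, and the annihilators $\delta_0^L,\delta_0^R$) I would obtain by lifting the corresponding isomorphisms $h$ of $\cat{C}$ to total maps via $h \mapsto [\rho_\oplus^{-1} \circ h, O]$. This lift is a functor $\cat{C} \to \La(\cat{C})$ sending isomorphisms to isomorphisms, so each lifted coherence map is invertible in $\La(\cat{C})$. Naturality of these isomorphisms with respect to arbitrary \emph{partial} morphisms, as opposed to merely total ones, is the one place where genuine work is needed: one must commute a structural isomorphism of $\cat{C}$ past the error bookkeeping of a partial map, which amounts to a diagram chase combining the naturality of that isomorphism with the Laplaza relation tying it to the distributors. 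I expect this interchange, reconciling the structural isomorphisms with the error-collection on both the $\otimes$ and $\oplus$ sides, to be the main obstacle. Once it is established, every Laplaza coherence equation in $\La(\cat{C})$ collapses onto the same equation in $\cat{C}$ after discarding the (isomorphic) error components.

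Finally, for $\Ra(\cat{C})$ I would invoke the definition $\Ra(\cat{C}) = \La(\cat{C}^\opp)^\opp$. Since the two symmetric monoidal structures and the distributivity of $\cat{C}$ all transpose, $\cat{C}^\opp$ is again a rig category; the first part then makes $\La(\cat{C}^\opp)$ a rig category, and taking opposites preserves rig categories, so $\Ra(\cat{C})$ is a rig category with no further computation.
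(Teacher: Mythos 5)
Your construction is correct, but it takes a genuinely different route from the paper. The paper's proof is essentially a citation: it invokes Lemma~12 of \cite{heunenkaarsgaard:qie}, where the rig structure on $\Ra(\cat{C})$ was already established, and then obtains $\La(\cat{C})$ via the duality $\La(\cat{C}) \cong \Ra(\cat{C}^\opp)^\opp$ together with the observation that $\cat{C}$ is a rig category iff $\cat{C}^\opp$ is. You go the other way around: you build the rig structure on $\La(\cat{C})$ from scratch --- the expected formulas for $\oplus$ and $\otimes$ on partial maps, with the error object of a tensor being the three cross-terms $(E \otimes D) \oplus (B \otimes E') \oplus (E \otimes E')$ produced by the distributors --- and then transfer to $\Ra(\cat{C})$ by the same duality, applied in the opposite direction. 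Your version is self-contained and correctly isolates where the real work lies: descent of both operations to $\sim_L$-classes (which, as you note, only needs monotonicity with respect to $\leq_L$ since $\sim_L$ is its equivalence closure), and naturality of the lifted coherence isomorphisms against genuinely \emph{partial} morphisms, where the square commutes only after commuting a structural isomorphism of $\cat{C}$ past the error bookkeeping. The cost is that the Laplaza coherence verification, which you identify but only sketch, is a long case analysis that the cited lemma has already carried out; the paper's proof buys brevity at the price of outsourcing exactly that computation. Both arguments are sound.
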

\begin{proof}
  That $\Ra(\cat{C})$ is a rig category was shown in \cite[Lemma
  12]{heunenkaarsgaard:qie}; that $\La(\cat{C})$ is also a rig category follows
  by $\La(\cat{C}) \cong \Ra(\cat{C}^\opp)^\opp$ and the fact that $\cat{C}$ is
  a rig category iff $\cat{C}^\opp$ is.
\end{proof}
\begin{proposition}
  The additive unit $O$ is terminal in $\La(\cat{C})$ and initial in 
  $\Ra(\cat{C})$.
\end{proposition}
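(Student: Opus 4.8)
The plan is to prove the two claims dual to one another: since $\Ra(\cat{C}) = \La(\cat{C}^\opp)^\opp$ and the additive unit $O$ of $\cat{C}$ is equally the additive unit of $\cat{C}^\opp$, it suffices to establish that $O$ is terminal in $\La(\cat{C})$ for an arbitrary rig category $\cat{C}$; initiality of $O$ in $\Ra(\cat{C})$ then follows by applying this to $\cat{C}^\opp$ and dualising. So I would fix an object $A$ and must (i) produce a morphism $A \to O$ in $\La(\cat{C})$ and (ii) show every such morphism coincides with it.

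For existence, the inverse left unitor $\lambda_\oplus^{-1} \colon A \to O \oplus A$ furnishes the candidate $[\lambda_\oplus^{-1}, A] \colon A \to O$. For uniqueness, recall that a morphism $A \to O$ is an $\sim_L$-class of maps $f \colon A \to O \oplus E$ in $\cat{C}$, so it is enough to show $\lambda_\oplus^{-1} \le_L f$ for every such $f$, forcing all representatives to collapse into a single equivalence class. The mediator I would use is $m := \lambda_\oplus \circ f \colon A \to E$, obtained by trivialising the $O$-summand via the left unitor $\lambda_\oplus \colon O \oplus E \to E$. It then remains to verify that $(\id_O \oplus m) \circ \lambda_\oplus^{-1} = f$.

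This last equality is where the only genuine computation sits, and it is short: naturality of the left unitor at $m$ gives $\lambda_\oplus \circ (\id_O \oplus m) = m \circ \lambda_\oplus$, whence $\lambda_\oplus \circ (\id_O \oplus m) \circ \lambda_\oplus^{-1} = m \circ \lambda_\oplus \circ \lambda_\oplus^{-1} = m = \lambda_\oplus \circ f$; since $\lambda_\oplus$ is an isomorphism, cancelling it on the left yields the desired identity. I do not expect a real obstacle here, as everything reduces to coherence of the unitors. The one point that must be handled with care is conceptual rather than computational: ``uniqueness'' means uniqueness of an $\sim_L$-class, so the argument must show that an arbitrary representative $f$, with \emph{arbitrary} undefined part $E$, is equivalent to the single representative $\lambda_\oplus^{-1}$, rather than comparing two maps sharing a fixed $E$. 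The mediator $m$ is precisely what absorbs the freedom in choosing $E$, which is the whole purpose of quotienting by $\sim_L$.
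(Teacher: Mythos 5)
Your proof is correct and follows the same route as the paper: the paper also takes $[\lambda_\oplus^{-1},A]$ as the terminal map and disposes of uniqueness by appealing to the definition of $\sim_L$, leaving implicit exactly the mediator $m = \lambda_\oplus \circ f$ and the naturality computation that you spell out. The dualisation to initiality of $O$ in $\Ra(\cat{C})$ is likewise how the paper concludes.
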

\begin{proof}
  The inverse left unitor $\lambda_\oplus^{-1} \colon A \to O \oplus A$ of $\cat{C}$ represents a morphism $A \to O$ in $\La(\cat{C})$; this satisfies the universal property of the terminal object by definition of $\sim_L$. Dually, $O$ is initial in $\Ra(\cat{C})$.
\end{proof}

We call a rig category \emph{additively coaffine} when the additive unit is initial, and \emph{additively affine} when it is terminal.

\begin{proposition}
  There are strict rig functors $\mathcal{D} \colon \cat{C} \to \La(\cat{C})$ 
  and $\mathcal{E} \colon \cat{C} \to \Ra(\cat{C})$.
\end{proposition}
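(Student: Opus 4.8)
The plan is to define $\cD$ explicitly, verify functoriality, check strictness for each of the two monoidal products, and obtain $\cE$ at the end by duality. I would take $\cD$ to be the identity on objects and to send a morphism $f \colon A \to B$ of $\cat{C}$ to the class $[\rho_\oplus^{-1} \circ f, O] \colon A \to B$ in $\La(\cat{C})$ --- that is, to the total lift of $f$ described just before Definition~\ref{def-affine-comp}. This is manifestly well defined, since the input is an honest $\cat{C}$-morphism and the output is a specific representative. Using the identification $\La(\cat{C}) \cong \Ra(\cat{C}^\opp)^\opp$ recorded above, it then suffices to treat $\cD$: the coaffine embedding is $\cE := (\cD_{\cat{C}^\opp})^\opp$, a strict rig functor $\cat{C} = (\cat{C}^\opp)^\opp \to \La(\cat{C}^\opp)^\opp = \Ra(\cat{C})$ because passing to opposites preserves strict rig functors.

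Functoriality is the first computation. Preservation of identities is immediate, since $\cD(\id_A) = [\rho_\oplus^{-1}, O]$ is by definition the identity of $\La(\cat{C})$. For composition I would expand $\cD(g) \circ \cD(f)$ via the formula of Definition~\ref{def-affine-comp}; naturality of $\rho_\oplus^{-1}$ identifies its defined part with $\alpha_\oplus \circ \rho_\oplus^{-1} \circ \rho_\oplus^{-1} \circ g \circ f \colon A \to C \oplus (O \oplus O)$ (the two unitors taken at $C \oplus O$ and $C$), to be compared with the representative $\rho_\oplus^{-1} \circ g \circ f \colon A \to C \oplus O$ of $\cD(g \circ f)$. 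With the mediator $m = \rho_\oplus \colon O \oplus O \to O$, the comparison reduces to the identity $(\id_C \oplus m) \circ \alpha_\oplus = \rho_\oplus$ between parallel coherence isomorphisms $(C \oplus O) \oplus O \to C \oplus O$, which holds by MacLane's coherence theorem. This is a direct $\le_L$-relation, so $\cD(g) \circ \cD(f) = \cD(g \circ f)$.

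For strictness I would first note that $\cD$ is the identity on objects and that both monoidal products of $\La(\cat{C})$ agree with those of $\cat{C}$ on objects, so units and tensors of objects are preserved on the nose; moreover, from the description of the inherited rig structure on $\La(\cat{C})$, each coherence isomorphism of $\La(\cat{C})$ is represented by the corresponding ($O$-padded) coherence isomorphism of $\cat{C}$, so $\cD$ carries each to the other and preserves them strictly. The remaining content is that $\cD$ preserves $\oplus$ and $\otimes$ of morphisms. The additive case $\cD(f \oplus g) = \cD(f) \oplus \cD(g)$ runs exactly as the composition check: after using naturality to pull the unitors out of $f \oplus g$, the two representatives differ by the canonical reshuffling $(B \oplus O) \oplus (B' \oplus O) \to (B \oplus B') \oplus (O \oplus O)$ followed by a mediator $O \oplus O \to O$, and matching them against $\rho_\oplus^{-1} \circ (f \oplus g)$ is again an equality of parallel symmetric-monoidal coherence maps.

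The multiplicative case is where I expect the real work. Here $\cD(f) \otimes \cD(g)$ acquires a representative by pushing $f \otimes g$ through the distributors $\delta^L, \delta^R$, whose undefined part is assembled from $B \otimes O$, $O \otimes B'$ and $O \otimes O$ rather than being a single $O$. To match $\cD(f \otimes g) = [\rho_\oplus^{-1} \circ (f \otimes g), O]$, one must collapse this undefined part to $O$ using the annihilators $\delta_0^L$ and $\delta_0^R$, and the mediator together with the identity it must satisfy now lives at the level of \emph{rig} coherence rather than plain monoidal coherence. The hard part is therefore precisely this bookkeeping: invoking Laplaza's coherence theorem to see that the composite of distributors, annihilators, unitors and the chosen mediator equals $\rho_\oplus^{-1} \circ (f \otimes g)$, so that the two classes coincide. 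Once this is established, $\cD$ is a strict rig functor, and $\cE$ follows by the duality noted above.
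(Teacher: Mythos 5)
Your construction is exactly the paper's: the paper defines $\mathcal{D}(A)=A$, $\mathcal{D}(f)=[\rho_\oplus^{-1}\circ f, O]$, obtains $\mathcal{E}$ dually, and dismisses the verification as ``straightforward calculations.'' Your proposal simply carries out those calculations (correctly identifying that the additive checks need only MacLane coherence while strict preservation of $\otimes$ needs the annihilators and Laplaza coherence), so it is the same approach with the elided details filled in.
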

\begin{proof}
  Define $\mathcal{D}(A) = A$ on objects, and $\mathcal{D}(f) =
  [\rho_\oplus^{-1} \circ f, O]$ on morphisms, and $\mathcal{E}$ dually.
  Straightforward calculations show that this defines strict rig functors.
\end{proof}

As the name suggests, these are, indeed, completions on rig categories.
\begin{proposition}\label{prop-univ}
  $\La(\cat{C})$ is the additive affine completion of $\cat{C}$ in the 
  following sense: given any additively affine rig category $\cat{D}$ and 
  strong rig functor $F \colon \cat{C} \to \cat{D}$, there is a unique strong rig 
  functor $\hat{F} \colon \La(\cat{C}) \to \cat{D}$ making the diagram below commute.
  % https://q.uiver.app/?q=WzAsMyxbMCwwLCJcXGNhdHtDfSJdLFsxLDAsIlxcTGEoXFxjYXR7Q30pIl0sWzEsMSwiXFxjYXR7RH0iXSxbMCwxLCJcXG1hdGhjYWx7RH0iXSxbMSwyLCJcXGhhdHtGfSJdLFswLDIsIkYiLDJdXQ==
  \[\begin{tikzcd}[ampersand replacement=\&]
  	{\cat{C}} \& {\La(\cat{C})} \\
  	\& {\cat{D}}
  	\arrow["{\mathcal{D}}", from=1-1, to=1-2]
  	\arrow["{\hat{F}}", from=1-2, to=2-2]
  	\arrow["F"', from=1-1, to=2-2]
  \end{tikzcd}\]
\end{proposition}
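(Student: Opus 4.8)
The plan is to construct $\hat F$ in the only way the triangle permits, check it is a strong rig functor, and then extract uniqueness from the very same description. On objects there is no freedom: since $\mathcal{D}(A) = A$, commutativity forces $\hat F(A) = F(A)$. The substance is the action on a morphism $[f,E] \colon A \to B$, represented by some $f \colon A \to B \oplus E$ in $\cat{C}$; the guiding idea is to apply $F$ and then \emph{discard} the undefined part $E$. Because $F$ is strong rig it carries a natural isomorphism $\phi_{B,E} \colon F(B \oplus E) \xrightarrow{\sim} F(B) \oplus F(E)$ and an isomorphism $F(O) \cong O$, and because $\cat{D}$ is additively affine its additive unit $O$ is terminal, so there is a unique morphism $!_{F(E)} \colon F(E) \to O$. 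I would therefore define
\[
\hat F([f,E]) \;=\; \rho_\oplus \circ (\id_{F(B)} \oplus\, !_{F(E)}) \circ \phi_{B,E} \circ F(f),
\]
the evident composite $F(A) \to F(B \oplus E) \to F(B) \oplus F(E) \to F(B) \oplus O \to F(B)$.

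The first thing to verify is that this respects $\sim_L$, and this is where terminality does the real work. If $f \le_L f'$ via a mediator $m \colon E \to E'$, so that $(\id_B \oplus m) \circ f = f'$, then the two discarding maps agree because $!_{F(E')} \circ F(m) = \,!_{F(E)}$: both are maps $F(E) \to O$ into a terminal object, hence equal. Thus $\hat F$ takes equal values on $\le_L$-related morphisms, and since $\sim_L$ is the equivalence closure of $\le_L$, it is well defined on $\sim_L$. Functoriality then follows by the same principle: preservation of the identity $[\rho_\oplus^{-1}, O]$ is a triangle-coherence computation in $\cat{D}$, and preservation of composition reduces, after pushing $F$ through the structural isomorphisms using naturality of $\phi$ and the rig coherences of $\cat{D}$, to the observation that discarding $E' \oplus E$ coincides with discarding $E'$ and then $E$ — again because all competing maps into $O$ are forced to be equal.

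I expect the main obstacle to be showing that $\hat F$ is genuinely a \emph{strong rig} functor, i.e.\ that the structural isomorphisms inherited from $F$ remain natural and coherent once the morphisms of $\La(\cat{C})$ carry discarded parts. On objects $\hat F$ agrees with $F$, so the comparison cells are those of $F$; the task is to check their naturality squares against $\La(\cat{C})$-morphisms (not merely $\cat{C}$-morphisms) and to re-verify Laplaza's coherence diagrams in this setting. Each such check unwinds to a diagram in $\cat{D}$ that commutes by naturality of the $F$-comparisons together with uniqueness of maps into the terminal object $O$; the bookkeeping is lengthy but mechanical. Commutativity of the triangle is then immediate: $\hat F(\mathcal{D}(f)) = \hat F([\rho_\oplus^{-1} \circ f, O])$ discards only a trivial $O$-summand, and the unitor coherence of $\cat{D}$ collapses this back to $F(f)$.

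Finally, uniqueness comes for free from a factorisation. Every morphism $[f,E] \colon A \to B$ in $\La(\cat{C})$ decomposes as $d_E \circ \mathcal{D}(f)$, where $\mathcal{D}(f) \colon A \to B \oplus E$ views $f$ as total and $d_E = \rho_\oplus \circ (\id_B \oplus t_E) \colon B \oplus E \to B$ discards $E$ using the unique terminal map $t_E \colon E \to O$ guaranteed by the earlier proposition that $O$ is terminal in $\La(\cat{C})$; one checks $d_E \circ \mathcal{D}(f) \sim_L [f,E]$ via an invertible mediator built from the unitor and associator. Now any strong rig functor $G$ with $G \circ \mathcal{D} = F$ is forced on $\mathcal{D}$-images to equal $F$, and is forced on each $t_E$ because $G(t_E)$ followed by $G(O) \cong O$ must be \emph{the} unique map $G(E) \to O$ in the affine category $\cat{D}$. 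Hence $G([f,E]) = G(d_E) \circ F(f)$ is completely determined, and since $\hat F$ was defined by exactly this discarding recipe, $G = \hat F$.
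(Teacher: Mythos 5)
Your proposal is correct, but it takes a different route from the paper: the paper's entire proof is ``By [Theorem 19 of Heunen--Kaarsgaard] and duality,'' i.e.\ it establishes the universal property for the coaffine completion $\Ra(\cat{C})$ by citation and transports it to $\La(\cat{C})$ via $\La(\cat{C}) \cong \Ra(\cat{C}^\opp)^\opp$, whereas you give a direct, self-contained construction of $\hat{F}$ for $\La$ itself. Your argument is sound at every step that matters: the definition of $\hat{F}$ by ``apply $F$, then discard'' is the only possible one; well-definedness on $\sim_L$ follows correctly because any function constant on $\le_L$-related pairs is constant on the generated equivalence relation, and terminality of $O$ in $\cat{D}$ forces $!_{F(E')} \circ F(m) = {!_{F(E)}}$; and your uniqueness argument via the factorisation $[f,E] = d_E \circ \mathcal{D}(f)$, with $G(t_E)$ pinned down because $G(O) \cong O$ is terminal, is exactly the right mechanism. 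What the paper's approach buys is brevity and reuse of an already-verified result (in particular, the lengthy check that the comparison cells of $F$ remain natural against all $\La(\cat{C})$-morphisms and still satisfy Laplaza's coherence conditions is inherited from the cited theorem rather than redone); what yours buys is a proof readable without the external reference, and one that makes visible precisely where affineness of $\cat{D}$ is used. The only part you leave genuinely unchecked --- naturality of the $\otimes$-comparison cells and the distributor coherences in the presence of discarded summands --- is flagged honestly and does go through by the same terminality argument, so I would not count it as a gap at the level of detail this paper itself employs.
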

\begin{proof}
  By \cite[Theorem 19]{heunenkaarsgaard:qie} and duality.
\end{proof}

Dualising this result exhibits $\Ra(\cat{C})$ as the additive coaffine completion of $\cat{C}$.

\section{Dagger partiality and the additive biaffine completion}
\label{sec:biaffine_completion}

As the (co)affine completion explicitly involves hiding a part of the source or target space of a morphism, we cannot expect to lift either construction to a completion of \emph{dagger} rig categories. A great example of this fact is demonstrated by considering $\Ra(\Unitary)$. Write $\cat{Isometry}$ for the category of finite-dimensional Hilbert spaces and morphisms satisfying $f^\dag \circ f = \mathrm{id}$, and $\cat{coIsometry}$ for its dual.

\begin{proposition}\label{prop-r-unitary}
  There are rig equivalences $\Ra(\Unitary) \simeq \Isometry$ and $\La(\Unitary) \simeq \coIsometry$.
\end{proposition}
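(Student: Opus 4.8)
The plan is to exhibit an explicit identity-on-objects functor $R \colon \Ra(\Unitary) \to \Isometry$ and show it is a fully faithful rig equivalence; the second equivalence $\La(\Unitary) \keq \coIsometry$ will then drop out by duality. A morphism $[u, E] \colon A \to B$ of $\Ra(\Unitary)$ is the $\sim_R$-class of a unitary $u \colon A \oplus E \to B$, and I would send it to the composite $u \circ \iota_A \colon A \to B$, where $\iota_A \colon A \to A \oplus E$ is the canonical isometric inclusion of the $A$-summand. Although $\iota_A$ is not a unitary (so not a morphism of $\Unitary$), it is an isometry, and a composite of isometries is an isometry; hence $R[u,E] := u \circ \iota_A$ is a bona fide morphism of $\Isometry$. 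Intuitively, $R$ remembers only what $u$ does to the ``visible'' input $A$, discarding the hidden ancilla $E$.

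First I would check that $R$ is well defined. Since $\sim_R$ is the equivalence closure of $\le_R$, it suffices to see that $R$ is constant on $\le_R$-comparable representatives. A mediator in $\Unitary$ is a unitary $m \colon E' \to E$ with $u \circ (\id_A \oplus m) = u'$; composing on the right with $\iota_A$ and using $(\id_A \oplus m) \circ \iota_A = \iota_A$ gives $R[u',E'] = R[u,E]$ at once. Functoriality is then a direct diagram chase: the identity of $\Ra(\Unitary)$ is represented by the unitor $\rho_\oplus \colon A \oplus O \to A$ with $\rho_\oplus \circ \iota_A = \id_A$, and for composition I would unwind the (dualised) $\Ra$-composition formula and use that the inclusion of the visible summand commutes past the associator, so that $R$ of a composite factors as $(g \circ \iota_B) \circ (f \circ \iota_A)$.

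The heart of the argument is that $R$ is an equivalence, and here the faithfulness step is the one I expect to be the main obstacle, since it is where the finite-dimensional dilation content lives. The functor is bijective on objects, hence essentially surjective. For fullness, given any isometry $f \colon A \to B$ I would take $E = (\im f)^\perp \subseteq B$ and extend $f$ to the unitary $u \colon A \oplus E \to B$ that is $f$ on $A$ and the inclusion of the complement on $E$; then $R[u,E] = f$. For faithfulness, suppose two unitaries $u \colon A \oplus E \to B$ and $u' \colon A \oplus E' \to B$ restrict to the same isometry $f = u \circ \iota_A = u' \circ \iota_A$. Because $u$ is unitary and carries $A$ onto $\im f$, it must carry $E$ isometrically onto $(\im f)^\perp$, and likewise $u'$ carries $E'$; in particular $\dim E = \dim E' = \dim B - \dim A$. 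Setting $m := (u \circ \iota_E)^\dagger \circ (u' \circ \iota_{E'}) \colon E' \to E$, with $\iota_E, \iota_{E'}$ the inclusions of the hidden summands, produces a unitary (two isometries onto the same subspace differ by a unitary on the source) satisfying $u \circ (\id_A \oplus m) = u'$, exhibiting $u$ and $u'$ as $\le_R$-comparable, hence $\sim_R$-equivalent, so $[u,E] = [u',E']$. This is precisely the point at which it matters that mediators are allowed to be arbitrary unitaries on the hidden part.

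It remains to upgrade $R$ to a rig functor and to deduce the second equivalence. Since $R$ is identity on objects, preservation of $\oplus$ and $\otimes$ reduces to checking by concrete computation with direct sums and tensor products of Hilbert spaces that $R$ sends the $\oplus$- and $\otimes$-combinations of morphisms to the corresponding combinations in $\Isometry$ up to the rig coherence isomorphisms; I expect this to be routine, in fact strict. Finally, $\La(\Unitary) \keq \coIsometry$ follows formally from the first equivalence using $\La(\cat{C}) \cong \Ra(\cat{C}^\opp)^\opp$, the dagger-induced isomorphism $\Unitary \cong \Unitary^\opp$, and $\coIsometry = \Isometry^\opp$.
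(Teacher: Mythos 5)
Your argument is correct, but it takes a more self-contained route than the paper: the paper's entire proof of $\Ra(\Unitary) \simeq \Isometry$ is a citation to Huot and Staton, followed by exactly the duality chain $\La(\Unitary) \cong \Ra(\Unitary^\opp)^\opp \cong \Ra(\Unitary)^\opp \simeq \Isometry^\opp \simeq \coIsometry$ that you use for the second half. What you have written out explicitly --- the restriction functor $[u,E] \mapsto u \circ \iota_A$, well-definedness via $(\id_A \oplus m)\circ\iota_A = \iota_A$, fullness by extending an isometry $f$ to a unitary on $A \oplus (\im f)^\perp$, and faithfulness by showing that two unitary dilations of the same isometry differ by a unitary $m \colon E' \to E$ on the ancilla --- is essentially the standard proof of the cited result, and all the steps check out: in particular, since mediators in $\Ra(\Unitary)$ are unitaries, $\le_R$ is already an equivalence relation, so constancy on $\le_R$-comparable pairs does suffice, and your mediator $m = (u\circ\iota_E)^\dagger\circ(u'\circ\iota_{E'})$ is indeed unitary because both factors are isometries onto $(\im f)^\perp$. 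The only part you leave somewhat schematic is the verification that $R$ is a \emph{rig} equivalence (compatibility with the distributors and coherence data), but this is strict for an identity-on-objects restriction functor as you say, and the paper does not spell it out either. What your version buys is independence from the external reference; what the paper's buys is brevity.
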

\begin{proof}
  That $\Ra(\Unitary) \simeq \Isometry$ was shown in \cite{huotstaton:universal}. The other statement follows from duality: $\La(\Unitary) 
  \simeq \Ra(\Unitary^\opp)^\opp \simeq \Ra(\Unitary)^\opp \simeq \Isometry^\opp 
  \simeq \coIsometry$.
\end{proof}
However, though $\Unitary$ is a dagger rig category, $\Isometry$ and $\coIsometry$ are mere rig categories. 
Intuitively, this must be the case because dagger categories are self-dual (\textit{i.e.}, satisfy $\cat{C} \cong \cat{C}^\opp$) so limits and colimits coincide, but the affine completion only adds (certain) limits without the corresponding colimits. However, this also suggests that if we seek a notion of partiality that respects daggers, we would need the additive unit to be both initial and terminal, \textit{i.e.}, a \emph{zero object}. Fortunately, we can ensure this by applying \(\La\) after \(\Ra\) or vice versa.

\begin{proposition}
  The additive unit $O$ is a zero object in both $\La(\Ra(\cat{C}))$ and 
  $\Ra(\La(\cat{C}))$.
\end{proposition}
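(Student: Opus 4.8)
The plan is to use that a zero object is precisely an object that is simultaneously initial and terminal, and to treat the two halves separately. The preceding proposition already gives that the additive unit $O$ is terminal in $\La(\cat{D})$ for \emph{any} rig category $\cat{D}$, so applying it with $\cat{D} = \Ra(\cat{C})$ shows that $O$ is terminal in $\La(\Ra(\cat{C}))$ for free. The whole content therefore lies in showing that $O$ stays \emph{initial} after applying $\La$, and I would isolate this as a preservation lemma: if $O$ is initial in $\cat{D}$, then $O$ is initial in $\La(\cat{D})$ (dually, $\Ra$ preserves terminality of $O$).

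To prove the lemma, recall that a morphism $O \to A$ in $\La(\cat{D})$ is an equivalence class $[f, E]$ of a map $f \colon O \to A \oplus E$ in $\cat{D}$ under $\sim_L$. Because $O$ is initial in $\cat{D}$, for each choice of $E$ there is exactly one such $f$, so the only remaining freedom is in the auxiliary object $E$. I would then show that every such class collapses to $[f_O, O]$, where $f_O \colon O \to A \oplus O$ is the unique map out of the initial object: taking the mediator $m = !_E \colon O \to E$ (again unique by initiality), the triangle defining $\le_L$ commutes because $(\id_A \oplus !_E) \circ f_O$ and $f$ are both maps out of $O$ and hence equal. This yields $f_O \le_L f$, so $[f_O, O] = [f, E]$, and therefore there is exactly one morphism $O \to A$. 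Since this morphism plainly exists, $O$ is initial in $\La(\cat{D})$.

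Applying the lemma with $\cat{D} = \Ra(\cat{C})$, where $O$ is initial by the earlier proposition, shows that $O$ is both initial and terminal, hence a zero object, in $\La(\Ra(\cat{C}))$. The statement for $\Ra(\La(\cat{C}))$ then follows by duality: from $\La(\cat{X}) \cong \Ra(\cat{X}^\opp)^\opp$ one computes $\Ra(\La(\cat{C}))^\opp \cong \La(\Ra(\cat{C}^\opp))$, in which $O$ is a zero object by the case just established applied to $\cat{C}^\opp$; since being a zero object is self-dual, $O$ is a zero object in $\Ra(\La(\cat{C}))$ as well. I expect the only delicate step to be the preservation lemma, and specifically the verification that distinct auxiliary objects $E$ give $\sim_L$-equivalent representatives — this is exactly where initiality of $O$ is used twice, once to pin down the representative for each $E$ and once to supply the connecting mediators.
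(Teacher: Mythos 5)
Your proposal is correct and follows essentially the same route as the paper: the paper likewise reduces the claim to the fact that $\Ra(-)$ preserves terminal objects (citing \cite[Lemma 11]{heunenkaarsgaard:qie}) and its dual that $\La(-)$ preserves initial objects, then combines this with the earlier proposition and duality. The only difference is that you prove the preservation lemma inline rather than citing it, and your argument for it --- initiality of $O$ pinning down the unique representative $f \colon O \to A \oplus E$ for each $E$ and supplying the mediator $!_E \colon O \to E$ that links them all to $[f_O, O]$ --- is sound.
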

\begin{proof}
  By \cite[Lemma 11]{heunenkaarsgaard:qie}, $\Ra(-)$ preserves terminal
  objects, and by duality, $\La(-)$ preserves initial objects. Thus $O$ is both 
  initial and terminal (\textit{i.e.}, a zero object) in both $\La(\Ra(\cat{C}))$ and 
  $\Ra(\La(\cat{C}))$.
\end{proof}

The situation is interesting: neither $\La(-)$ nor $\Ra(-)$ on their own preserve dagger rig categories, but as we will see, their combination does. Hence it is advantageous to consider them together for dagger rig categories, which also leads to a slightly simpler presentation.
In a rig category, define $\medeq{LR}$ as the least equivalence relation containing the three relations $\medeq{\id}$, $\medeq{L}$, and $\medeq{R}$
defined as follows, for all $f \colon A \oplus H \to B \oplus G$:
\begin{itemize}
  \item $f \medeq{L} (\id \oplus m) \circ f$ for all $m \colon G \to G'$;
  \item $f \medeq{R} f \circ (\id \oplus n)$ for all $n \colon H' \to H$;
  \item $f \medeq{\id} \alpha_\oplus \circ (f \oplus \id_X) \circ 
  \alpha_\oplus^{-1}$ for all identities $\id_X$.
\end{itemize}

\begin{definition}
  Given a rig category $\cat{C}$, its \emph{biaffine completion} 
  $\LRa(\cat{C})$ is the category whose
  \begin{itemize}
    \item objects are those of $\cat{C}$,
    \item morphisms $[H,f,G] \colon A \to B$ are triples consisting of two objects 
    $H$ and $G$ and a morphism $A \oplus H \to B \oplus G$ of $\cat{C}$ 
    quotiented by $\medeq{LR}$,
    \item identities $A \to A$ are $[O,\id_{A \oplus O},O]$, and
    \item the composition of $f \colon A \oplus H \to B \oplus G$ and 
    $g \colon B \oplus H' \to C \oplus G'$ is given by
  \end{itemize}
  $$A \oplus (H \oplus H') \tot{\alpha_\oplus^{-1}} 
  (A \oplus H) \oplus H' \tot{f \oplus \id}
  (B \oplus G) \oplus H' \tot{\cong}
  (B \oplus H') \oplus G \tot{g \oplus \id}
  (C \oplus G') \oplus G \tot{\alpha_\oplus}
  C \oplus (G' \oplus G).$$
\end{definition}

We state some properties of the $\LRa$-construction, the proofs of which can be found in the appendix.

\begin{proposition}\label{prop-lra-zero}
  The additive unit $O$ is a zero object in $\LRa(\cat{C})$.
\end{proposition}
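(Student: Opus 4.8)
The plan is to show that $O$ is terminal; initiality then follows by a dual argument, and together these make $O$ a zero object. For terminality I would fix the candidate universal arrow $t_A := [O, u_A, A] \colon A \to O$, where $u_A \colon A \oplus O \to O \oplus A$ is the canonical coherence isomorphism, and then establish both that $t_A$ exists (immediate) and that every morphism $[H,f,G] \colon A \to O$ satisfies $[H,f,G] = t_A$. The subtlety to keep in mind throughout is that $O$ is typically \emph{not} terminal in $\cat{C}$ itself, so no arrow $A \to O$ of $\cat{C}$ is available to us: the terminality must be produced entirely from the mediator relations $\medeq{L}$, $\medeq{R}$, and $\medeq{\id}$.

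The key step absorbs the entire morphism $f$ into an output mediator. Given $f \colon A \oplus H \to O \oplus G$, I would set $m := \lambda_\oplus \circ f \colon A \oplus H \to G$ and $\iota := \lambda_\oplus^{-1} \colon A \oplus H \to O \oplus (A \oplus H)$. Naturality of the unitor then gives $(\id_O \oplus m) \circ \iota = \lambda_\oplus^{-1} \circ m = \lambda_\oplus^{-1} \circ \lambda_\oplus \circ f = f$, so that $\iota \medeq{L} f$ and hence $[H,f,G] = [H,\iota,A\oplus H]$. This already eliminates all dependence on $f$ and $G$; what remains is to remove the dependence on the input junk $H$. Here I would pad $t_A$ by $X := H$ using $\medeq{\id}$, obtaining a representative $\tilde u \colon A \oplus (O \oplus H) \to O \oplus (A \oplus H)$ with $[O,u_A,A] = [O \oplus H, \tilde u, A \oplus H]$, and then precompose with the isomorphism $\lambda_\oplus^{-1} \colon H \to O \oplus H$ via $\medeq{R}$ to change the input junk from $O \oplus H$ back to $H$. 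Since $\tilde u \circ (\id_A \oplus \lambda_\oplus^{-1})$ and $\iota$ are both coherence isomorphisms $A \oplus H \to O \oplus (A \oplus H)$, Laplaza coherence identifies them, yielding $t_A = [H,\iota,A\oplus H] = [H,f,G]$, as required.

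Initiality of $O$ then follows by duality: one checks $\LRa(\cat{C}^\opp) \cong \LRa(\cat{C})^\opp$ (the relations $\medeq{L}$ and $\medeq{R}$ interchange, while $\medeq{\id}$ is self-dual), so initiality of $O$ in $\LRa(\cat{C})$ is terminality of $O$ in $\LRa(\cat{C}^\opp)$, which is exactly the statement just proved applied to $\cat{C}^\opp$. The main obstacle is the first move of the uniqueness argument: because $\cat{C}$ supplies no arrows into $O$, one cannot collapse $f$ directly, and the trick of reading all of $f$ as the mediator $m = \lambda_\oplus \circ f$ — legitimate precisely because the defined target is the unit $O$, so postcomposition with $\lambda_\oplus$ discards no information — is what makes the argument go through. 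The remaining coherence bookkeeping in the padding and $\medeq{R}$ step is routine.
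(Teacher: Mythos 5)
Your proof is correct and follows essentially the same route as the paper's: the candidate arrow $A \to O$ is the same canonical coherence isomorphism $A \oplus O \to O \oplus A$, and uniqueness is obtained from the mediator relations by absorbing all of $f$ into an output mediator (legitimate precisely because the defined part of the target is $O$), which is what the paper's graphical argument encodes. Your explicit bookkeeping with $\medeq{\id}$ and $\medeq{R}$ to normalise the input junk, and the duality argument for initiality, are both sound.
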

\iffalse
\begin{proof}
  Define maps $A \to O$ and $O \to A$ as the equivalence classes 
  of the symmetry $\sigma_\oplus \colon A \oplus O \to O \oplus A$ and 
  $\sigma_\oplus \colon O \oplus A \to A \oplus O$. To see that the map $A \to O$ is 
  unique, let $f \colon A \oplus H \to O \oplus G$ be any other morphism of
  $\cat{C}$. Then $[O, \sigma, A] \medeq{LR} [H, f, G]$ can be seen using the graphical language~\cite{heunenvicary:cqt}:
  \ctikzfig{zero}
  That the map $O \to A$ is unique follows analogously, so $O$ is both initial and terminal, \textit{i.e.}, a zero object.
\end{proof}
\fi

\begin{proposition}\label{prop-lra-dagger}
  When $\cat{C}$ is a dagger rig category, so is $\LRa(\cat{C})$.
\end{proposition}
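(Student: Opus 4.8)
The plan is to equip $\LRa(\cat{C})$ with a dagger functor by reversing the underlying morphism of $\cat{C}$ and swapping the roles of the hidden source and target parts. Concretely, for a morphism $[H,f,G] \colon A \to B$ represented by $f \colon A \oplus H \to B \oplus G$ in $\cat{C}$, I would define its dagger to be $[G, f^\dagger, H] \colon B \to A$, where $f^\dagger \colon B \oplus G \to A \oplus H$ is the dagger taken in $\cat{C}$. The intuition is exactly that the $\LRa$-construction treats $H$ as ``hidden input'' and $G$ as ``hidden output,'' and daggering interchanges input with output, so the two auxiliary objects must swap. I would first verify this is well defined on $\medeq{LR}$-equivalence classes: since $\medeq{LR}$ is generated by $\medeq{L}$, $\medeq{R}$, and $\medeq{\id}$, it suffices to check that each generating relation is carried to another instance of a generating relation under $(-)^\dagger$. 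Here the key observation is that $(-)^\dagger$ in $\cat{C}$ exchanges pre- and post-composition, so $\medeq{L}$ (post-composing $\id \oplus m$ on the output $G$) is sent to $\medeq{R}$ (pre-composing $\id \oplus m^\dagger$ on the input side), and vice versa; the relation $\medeq{\id}$ is sent to another instance of $\medeq{\id}$ because the coherence isomorphisms $\alpha_\oplus$ are unitary in a dagger rig category.

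Next I would check that this assignment is a genuine dagger, \textit{i.e.}, a contravariant involutive identity-on-objects functor. Involutivity, $[H,f,G]^{\dagger\dagger} = [H,f,G]$, is immediate from $f^{\dagger\dagger} = f$ in $\cat{C}$. Functoriality, $( [H',g,G'] \circ [H,f,G])^\dagger = [H,f,G]^\dagger \circ [H',g,G']^\dagger$, is the step requiring genuine work: I would take the explicit five-step composite defining composition in $\LRa(\cat{C})$, apply $(-)^\dagger$ (which reverses the order of the five factors and daggers each), and then match the result against the composite defining $[G,f^\dagger,H] \circ [G',g^\dagger,H']$ in the opposite order. This matching relies on all the associators and symmetries being unitary so that their daggers are their inverses, and on the naturality/coherence equations of the rig structure to rearrange the bracketing; I expect this to reduce, up to $\medeq{\id}$ and the unitarity of the coherence maps, to the desired identity. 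The graphical calculus for dagger rig categories would make this bookkeeping transparent.

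Finally, to upgrade ``dagger category'' to ``dagger rig category,'' I must confirm that the two monoidal structures on $\LRa(\cat{C})$ (inherited from $\cat{C}$, the existence of which is covered by the accompanying proposition that $\LRa(\cat{C})$ is a rig category) interact correctly with this dagger, namely that every coherence isomorphism of $\LRa(\cat{C})$ is unitary with respect to it. Since each such coherence map is (the image under $\mathcal{D}$-style embedding of) a coherence isomorphism of $\cat{C}$, represented with trivial hidden parts $[O,\varphi,O]$, its $\LRa$-dagger is $[O,\varphi^\dagger,O]$, and $\varphi^\dagger = \varphi^{-1}$ holds in $\cat{C}$ precisely because $\cat{C}$ is a dagger rig category; well-definedness of the embedding then transports this to $\LRa(\cat{C})$.

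The main obstacle I anticipate is the functoriality verification: tracking how $(-)^\dagger$ reverses the five-fold composite and realigns it with the $\LRa$-composition of the daggered morphisms, in particular getting the interchange of associators and the swap isomorphism to land correctly up to the $\medeq{\id}$ relation. Everything else (well-definedness on classes, involutivity, unitarity of coherence maps) should follow by direct, if slightly tedious, application of the dagger axioms in $\cat{C}$ together with the unitarity of its coherence isomorphisms.
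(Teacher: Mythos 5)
Your proposal is correct and follows essentially the same route as the paper's own proof: the dagger is defined as $[H,f,G]^\dagger = [G,f^\dagger,H]$, well-definedness is checked by observing that $(-)^\dagger$ interchanges the $\medeq{L}$ and $\medeq{R}$ generators (via the daggered mediators) and preserves $\medeq{\id}$, functoriality is handled by flipping the composite and using unitarity of the coherence maps, and unitarity of the coherence isomorphisms in $\LRa(\cat{C})$ is inherited from $\cat{C}$. The paper leaves the functoriality verification at the same level of sketch that you do.
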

\iffalse
\begin{proof}
  For any morphism \([H,f,G] \colon A \to B\) in \(\LRa(\C)\), its dagger is 
  defined to be $$[H,f,G]^\dagger = [G,f^\dagger,H].$$
  First, we need to check this is well-defined, \textit{i.e.}, given \([H,f,G] = [H',f',G']\), verify \([H,f,G]^\dagger = [H',f',G']^\dagger\). This is straightforward, as \(f \medeq{LR} g\) implies \(f^\dagger \medeq{LR} g^\dagger\); to see this, notice that
  \begin{align*}
    f \medeq{L} g &\implies f^\dagger \medeq{R} g^\dagger \\
    f \medeq{R} g &\implies f^\dagger \medeq{L} g^\dagger
  \end{align*}
  by using the dagger of the mediators and, furthermore, \(f \medeq{\id} g\) 
  trivially implies \(f^\dagger \medeq{\id} g^\dagger\).
  Clearly, this dagger is involutive and
  $$
    \id_A^\dagger = [O,\id_{A\oplus O},O]^\dagger = [O,\id_{A\oplus 
    O}^\dagger,O] = \id_A.
  $$
  Checking explicitly that \((g \circ f)^\dagger = f^\dagger \circ g^\dagger\) is more involved, but conceptually trivial: flip the diagram and use the fact that \(\C\) is a dagger category. This works because, in \(\LRa(\C)\) there is symmetry between input and output: the dagger turns hidden parts of the input into a hidden part of the output, and vice versa. That coherence isomorphisms are unitary follows immediately by the fact that they are inherited from $\cat{C}$, where are they unitary by $\cat{C}$ a dagger rig category.
\end{proof}
\fi

As before, there is a functor $\mathcal{F} \colon \cat{C} \to \LRa(\cat{C})$ given on objects by $\mathcal{F}(A) = A$ and on morphisms by $F(f) = [0, f \oplus \id_0, 0]$, making $\LRa(\cat{C})$ a completion in the formal sense. We say that a rig category is \emph{additively biaffine} if the unit of the sum is a zero object.
 
\begin{theorem}\label{thm-add-biaff-compl}
  $\LRa(\cat{C})$ is the additive biaffine completion of $\cat{C}$ in the 
  following sense: given any additively biaffine rig category $\cat{D}$ and 
  strong rig functor $F \colon \cat{C} \to \cat{D}$, there is a unique strong rig 
  functor $\hat{F} \colon \LRa(\cat{C}) \to \cat{D}$ making the diagram below 
  commute.
  % https://q.uiver.app/?q=WzAsMyxbMCwwLCJcXGNhdHtDfSJdLFsxLDAsIlxcTFJhKFxcY2F0e0N9KSJdLFsxLDEsIlxcY2F0e0R9Il0sWzAsMSwiXFxtYXRoY2Fse0Z9Il0sWzEsMiwiXFxoYXR7Rn0iXSxbMCwyLCJGIiwyXV0=
  \[\begin{tikzcd}[ampersand replacement=\&]
  	{\cat{C}} \& {\LRa(\cat{C})} \\
  	\& {\cat{D}}
  	\arrow["{\mathcal{F}}", from=1-1, to=1-2]
  	\arrow["{\hat{F}}", from=1-2, to=2-2]
  	\arrow["F"', from=1-1, to=2-2]
  \end{tikzcd}\]
\end{theorem}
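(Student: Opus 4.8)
The plan is to define $\hat{F}$ explicitly and then show that this definition is forced, so that existence and uniqueness are handled together. On objects, commutativity of the triangle forces $\hat{F}(A) = F(A)$. On a morphism $[H,f,G] \colon A \to B$, represented by some $f \colon A \oplus H \to B \oplus G$ in $\cat{C}$, I would set
$$\hat{F}([H,f,G]) = F(A) \tot{\rho_\oplus^{-1}} F(A) \oplus O \tot{\id \oplus z_H} F(A) \oplus F(H) \tot{\bar f} F(B) \oplus F(G) \tot{\id \oplus z_G} F(B) \oplus O \tot{\rho_\oplus} F(B),$$
where $\bar f$ is $F(f)$ transported across the strong monoidal coherence isomorphisms of $F$, and $z_H \colon O \to F(H)$ and $z_G \colon F(G) \to O$ are the unique maps supplied by the zero object $O$ in the additively biaffine category $\cat{D}$. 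Intuitively, $\hat{F}$ discards the hidden parts of the source and target by routing them through the zero object.

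First I would check well-definedness, \textit{i.e.}, that $\hat{F}$ respects $\medeq{LR}$. For $f \medeq{L} (\id \oplus m) \circ f$ the two interpretations differ only by $z_{G'} \circ F(m)$ versus $z_G$ as maps $F(G) \to O$; these agree because $O$ is terminal in $\cat{D}$. The case $\medeq{R}$ is dual, using initiality of $O$. The case $\medeq{\id}$ is the conceptual crux: enlarging the hidden parts by a common object $X$ inserts an extra strand $O \tot{z_X} F(X) \tot{\id} F(X) \tot{z_X} O$, which is the unique endomorphism of $O$ and hence $\id_O$, so the two interpretations coincide. All three arguments rest solely on the universal property of the zero object, together with coherence of $F$.

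Next I would verify that $\hat{F}$ is a strong rig functor. Preservation of identities is immediate since $z_O$ is an isomorphism and $F(\id) = \id$. Preservation of composition is the main labour: given composable $[H,f,G]$ and $[H',g,G']$, unfolding the zig-zag composite from the definition and applying $F$, one must show that zeroing the combined hidden parts $H \oplus H'$ and $G' \oplus G$ agrees with zeroing $H,G$ and $H',G'$ separately around $\bar f$ and $\bar g$. This follows from naturality of the coherence isomorphisms and repeated use of uniqueness of maps into and out of $O$, the key point being that the internal $G$-output of $f$ and $H'$-input of $g$ are each annihilated through $O$. The monoidal coherence data for $\hat{F}$ (for both $\oplus$ and $\otimes$, with distributors and annihilators) is inherited from that of $F$ since $\mathcal{F}$ is identity-on-objects, and checking the rig coherence conditions is then a routine transport along these isomorphisms, using that the rig structure on $\LRa(\cat{C})$ is computed from that of $\cat{C}$.

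Commutativity $\hat{F} \circ \mathcal{F} = F$ holds because $\mathcal{F}(f) = [O, f \oplus \id_O, O]$ has trivial hidden parts, so the zero strands collapse (again $z_O$ is invertible) and only $\bar{f}$, \textit{i.e.}, $F(f)$ up to unitors, survives. For uniqueness, I would observe that any strong rig functor $\hat{G}$ with $\hat{G} \circ \mathcal{F} = F$ must agree with $F$ on objects, must send the monoidal unit $O$ to the unit of $\cat{D}$ (hence to its zero object), and must therefore send the canonical maps $A \to O$ and $O \to A$ of $\LRa(\cat{C})$ to the unique such maps in $\cat{D}$. Since every $[H,f,G]$ factors in $\LRa(\cat{C})$ as a projection $[O,\id,G]$ after $\mathcal{F}(f)$ after an injection $[H,\id,O]$, and these outer maps are assembled from $\mathcal{F}$-images and the zero-object structure maps via the monoidal product and unitors, $\hat{G}$ is pinned down on all morphisms and must equal $\hat{F}$. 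The main obstacle throughout is the composition-preservation bookkeeping; everything else reduces cleanly to the universal property of the zero object.
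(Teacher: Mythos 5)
Your proof is correct and takes essentially the same route as the paper's: both hinge on factoring every $[H,f,G]$ through the zero object and the image of $\mathcal{F}$, so that the value of $\hat{F}$ is forced by monoidality and the universal property of the zero object in $\cat{D}$, with functoriality and the strong rig structure then inherited from $F$. Your write-up additionally makes explicit the well-definedness check with respect to $\medeq{LR}$, which the paper leaves implicit.
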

\begin{proof}
  Assuming such a functor exists, we first prove its uniqueness. Imposing \(F = \widehat{F} \circ \cF\) implies that \(\widehat{F}(A) = F(A)\) on objects and, on morphisms in the image of \(\cF\), we have that \(\widehat{F}(\cF(f)) = F(f)\). It is easy to check (for instance, diagrammatically) that any \([H,f,G] \colon A \to B\) in \(\LRa[\C]\) decomposes as follows:
  \begin{equation*}
    [H,f,G] = A \tot{\rho^{-1}} A \oplus O \tot{\id \oplus !} A \oplus H \tot{\cF(f)} B \oplus G \tot{\id \oplus !} B \oplus O \tot{\rho} B
  \end{equation*}
  where the objects and morphisms shown are in \(\LRa[\C]\), where \(O\) is a zero object, and \(!\) refer to the corresponding unique morphisms.
  The image under \(\widehat{F}\) of each of these morphisms is uniquely determined because \(\widehat{F}\) is assumed to be monoidal, the fact that \(\D\) has \(0 \cong F(O)\) as its zero object, and the equality \(\widehat{F}(\cF(f)) = F(f)\) discussed above.
  Thus it only remains to prove that \(\widehat{F}\) as defined above is indeed a strong rig functor.
  By definition, \(\widehat{F}(\id) = F(\id)\), and functoriality is easy to check using naturality of \(\rho\) and the fact that \(0\) is a zero object. The fact that \(\widehat{F}\) is a strong rig functor follows directly from the same property for \(F\), since \(\cF\) is a strict rig functor, \(\widehat{F}(O) = F(O) \cong 0\), and \(\widehat{F}(A \oplus B) = F(A \oplus B) \cong F(A) \oplus F(B) = \widehat{F}(A) \oplus \widehat{F}(B)\).
\end{proof}
Using this characterisation, we can show commutativity of the $\La$ and $\Ra$ constructions, the proof of which is found in the appendix.
\begin{proposition}\label{prop-LR-eq}
  When $\C$ is a rig category, \(\La(\Ra(\C)) \cong \LRa(\C) \cong \Ra(\La(\C))\).
\end{proposition}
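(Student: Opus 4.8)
The plan is to deduce both isomorphisms from universal properties, rather than by unwinding and comparing the three quotient relations by hand. The key observation I would start from is that an additively biaffine rig category, having the additive unit $O$ as a \emph{zero} object, is simultaneously additively affine ($O$ terminal) and additively coaffine ($O$ initial). Since Theorem~\ref{thm-add-biaff-compl} characterises $\LRa(\C)$ (up to the evident rig equivalence) as the additive biaffine completion of $\C$, it then suffices to show that $\La(\Ra(\C))$ and $\Ra(\La(\C))$ each satisfy this \emph{same} universal property; the three categories are then canonically equivalent.

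For $\La(\Ra(\C))$ I would argue as follows. We have already seen that $O$ is a zero object in $\La(\Ra(\C))$, so it is an additively biaffine rig category, and I would take the composite $\mathcal{D} \circ \mathcal{E} \colon \C \to \Ra(\C) \to \La(\Ra(\C))$ of the two structure functors — each strict, hence strong — as its structure functor. To verify the universal property, let $\D$ be any additively biaffine rig category and $F \colon \C \to \D$ a strong rig functor. Since $\D$ is additively coaffine, the dual of Proposition~\ref{prop-univ} yields a unique strong rig functor $\tilde F \colon \Ra(\C) \to \D$ with $\tilde F \circ \mathcal{E} = F$. Since $\D$ is additively affine and $\Ra(\C)$ is a rig category, Proposition~\ref{prop-univ} applied to $\tilde F$ yields a unique strong rig functor $\hat F \colon \La(\Ra(\C)) \to \D$ with $\hat F \circ \mathcal{D} = \tilde F$. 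Then $\hat F \circ (\mathcal{D} \circ \mathcal{E}) = \tilde F \circ \mathcal{E} = F$, which gives existence of the extension.

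For uniqueness I would chain the two uniqueness clauses. If $G \colon \La(\Ra(\C)) \to \D$ is strong rig with $G \circ \mathcal{D} \circ \mathcal{E} = F$, then $G \circ \mathcal{D} \colon \Ra(\C) \to \D$ satisfies $(G \circ \mathcal{D}) \circ \mathcal{E} = F$, so uniqueness for $\Ra$ forces $G \circ \mathcal{D} = \tilde F$, whereupon uniqueness for $\La$ forces $G = \hat F$. This shows $\La(\Ra(\C))$ satisfies the universal property of Theorem~\ref{thm-add-biaff-compl}, so $\La(\Ra(\C)) \cong \LRa(\C)$. The case $\Ra(\La(\C)) \cong \LRa(\C)$ would run identically after interchanging the roles of $\La$ and $\Ra$ (and of affine and coaffine), using that $O$ is likewise a zero object in $\Ra(\La(\C))$.

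The technical content here is light, and what care is needed is organisational. One must confirm that every functor in play is genuinely a strong rig functor, which is immediate because $\mathcal{D}$ and $\mathcal{E}$ are strict and composites of strong rig functors are strong. The main — though modest — obstacle I anticipate is threading the two uniqueness statements together correctly: the uniqueness of $\hat F$ over $\La(\Ra(\C))$ is obtained only by first pinning down $G \circ \mathcal{D}$ via the $\Ra$-uniqueness clause, which in turn relies on the structure functor being exactly the composite $\mathcal{D} \circ \mathcal{E}$, so that factoring through it coincides with factoring first through $\mathcal{E}$ and then through $\mathcal{D}$.
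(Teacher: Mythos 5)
Your proof is correct and follows essentially the same route as the paper: both arguments rest entirely on the universal properties of $\La$, $\Ra$, and $\LRa$, with your step ``two categories satisfying the same universal property are canonically equivalent'' being precisely what the paper unwinds explicitly by constructing the two comparison functors and using the uniqueness clauses to show their composites are identities. The one organisational difference is that you first certify $\La(\Ra(\C))$ as an additive biaffine completion and then appeal to uniqueness of universal objects, whereas the paper interleaves these steps, but the content is the same.
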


\subsection{Unitaries with dagger partiality are contractions}

A linear map $f \colon A \to B$ between normed spaces $A$ and $B$ is \emph{(weakly) contractive} iff $\norm{f(x)} \le \norm{x}$ for all $x \in A$. So contractive maps include all isometries, coisometries, and unitaries.
We now show that applying the $\LRa{}$-construction to the category of finite-dimensional Hilbert spaces and unitaries is equivalent to the category $\Contraction$ of finite-dimensional Hilbert spaces and contractions. Another way of saying this is that extending the unitaries with dagger partiality gives precisely the contractions.

\begin{theorem}\label{thm-lra-unitary-contraction}
  $\LRa(\Unitary) \cong \Contraction$.
\end{theorem}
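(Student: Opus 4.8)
The plan is to exhibit an identity-on-objects functor $\Phi \colon \LRa(\Unitary) \to \Contraction$ given by \emph{compressing a dilation to its defined part}, and to show that it is a fully faithful dagger rig functor; being bijective on objects, such a $\Phi$ is then an isomorphism of categories, and in particular the claimed equivalence. I would represent a morphism $[H,f,G] \colon A \to B$ by a unitary $f \colon A \oplus H \to B \oplus G$ and set $\Phi[H,f,G] = \pi_B \circ f \circ \iota_A$, where $\iota_A \colon A \to A \oplus H$ is the coprojection and $\pi_B = \iota_B^\dagger \colon B \oplus G \to B$ the projection. Since $\iota_A$ is an isometry, $\pi_B$ a coisometry, and $f$ unitary, this composite is a contraction, so $\Phi$ is well-typed. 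It respects $\medeq{LR}$ on each generator: post-composing $f$ with $\id_B \oplus m$ leaves $\pi_B \circ f$ unchanged, pre-composing with $\id_A \oplus n$ leaves $f \circ \iota_A$ unchanged, and the reassociation generator $\medeq{\id}$ only pads the hidden summands by a part on which $f$ acts as an identity, so the compression is unaffected.

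Functoriality is a short computation: chasing $a \in A$ through the $\LRa$-composite of $f$ and $g \colon B \oplus H' \to C \oplus G'$, all hidden wires carry zeros and one reads off exactly $(\pi_C \circ g \circ \iota_B) \circ (\pi_B \circ f \circ \iota_A)$, so $\Phi$ preserves composition, and it plainly preserves identities. For fullness I would use \emph{Halmos dilation}: for a contraction $T \colon A \to B$, with defect operators $D_T = (\id_A - T^\dagger T)^{1/2}$ and $D_{T^\dagger} = (\id_B - T T^\dagger)^{1/2}$, the block matrix
\[
U_T = \begin{pmatrix} T & D_{T^\dagger} \\ D_T & -T^\dagger \end{pmatrix} \colon A \oplus B \to B \oplus A
\]
is unitary (using $D_T^2 = \id - T^\dagger T$, $D_{T^\dagger}^2 = \id - T T^\dagger$, and the intertwining $D_T T^\dagger = T^\dagger D_{T^\dagger}$), and $\Phi[B, U_T, A] = T$. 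Preservation of daggers is immediate, since compression commutes with the adjoint: $(\pi_B f \iota_A)^\dagger = \pi_A f^\dagger \iota_B$, which is $\Phi$ applied to $[H,f,G]^\dagger = [G, f^\dagger, H]$. Preservation of the $\oplus$- and $\otimes$-structures is inherited from $\Unitary$ and is routine.

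The main obstacle is faithfulness: two unitary dilations $f \colon A \oplus H \to B \oplus G$ and $f' \colon A \oplus H' \to B \oplus G'$ of the same contraction $T$ must be shown $\medeq{LR}$-equivalent. Writing $f$ in blocks $\left(\begin{smallmatrix} T & X \\ Y & Z \end{smallmatrix}\right)$, unitarity forces $Y^\dagger Y = D_T^2$ and $X X^\dagger = D_{T^\dagger}^2$, so the hidden data of any dilation is pinned down by $T$ up to partial isometries into the ancillas. I would reduce every dilation to the \emph{minimal} one supported on the ranges of $D_{T^\dagger}$ and $D_T$: the $\medeq{\id}$-generator strips off (or adjoins) a common summand on which the dilation restricts to a unitary between trivial ancillas, while the unitary mediators $m, n$ available in $\medeq{L}$ and $\medeq{R}$ (these are morphisms of $\Unitary$) align the remaining data through the polar decompositions of $X$ and $Y$. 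The delicate point, which is exactly the uniqueness of the one-step unitary dilation up to local ancilla unitaries, is that $m$ and $n$ must be chosen to match not merely the corner blocks $X$ and $Y$ but \emph{simultaneously} the large block $Z$; I would handle this by first aligning the defect subspaces and then checking that the complementary actions of two minimal dilations necessarily agree up to a unitary.

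A technically lighter route avoids the simultaneous $Z$-matching altogether. By Proposition~\ref{prop-LR-eq} we have $\LRa(\Unitary) \cong \La(\Ra(\Unitary))$, and by Proposition~\ref{prop-r-unitary} we have $\Ra(\Unitary) \simeq \Isometry$, so it suffices to prove $\La(\Isometry) \simeq \Contraction$. There the compression functor sends an isometry $g = \left(\begin{smallmatrix} T \\ Y \end{smallmatrix}\right) \colon A \to B \oplus E$ to $\pi_B \circ g = T$, fullness uses the isometric dilation $\left(\begin{smallmatrix} T \\ D_T \end{smallmatrix}\right) \colon A \to B \oplus A$, and faithfulness is much cleaner, since now only the single block $Y$ (constrained by $Y^\dagger Y = D_T^2$) need be matched: after enlarging two dilations to a common hidden space $E \oplus E'$---permitted because $\sim_L$ allows post-composition with any isometry---a single unitary mediator identifies their ranges, which is all that is required.
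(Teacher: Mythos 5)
Your ``technically lighter route'' is exactly the paper's proof: reduce via Propositions~\ref{prop-LR-eq} and~\ref{prop-r-unitary} to showing $\La(\Isometry) \simeq \Contraction$, use the Halmos/Julia isometric dilation $\bigl(\begin{smallmatrix} T \\ (1-T^\dagger T)^{1/2}\end{smallmatrix}\bigr)$ for fullness, and the uniqueness of isometric dilations up to an isometric mediator on the ancilla for faithfulness (the paper runs the functor in the opposite direction, dilating a contraction rather than compressing a dilation, which changes nothing of substance). You are also right that in the direct route the simultaneous matching of the $Z$-block is the genuinely delicate point; the paper sidesteps it the same way you do, so the proposal stands as essentially the paper's own argument.
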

\begin{proof}
  Proposition~\ref{prop-LR-eq} gives $\LRa(\Unitary) \cong
  \La(\Ra(\Unitary))$, and Proposition~\ref{prop-r-unitary} gives
  $\Ra(\Unitary) \cong \Isometry$, so it suffices to show $\La(\Isometry) \cong \Contraction$.
  The strategy is to define a functor \(F \colon \Contraction \to \La(\Isometry)\) and prove it is full, faithful and essentially surjective.
  On objects, \(F(A) = A\) so \(F\) is essentially surjective.
  Let \(T \colon A \to B\) be a contraction and let \(f \colon A \to B \oplus G\) be an isometry such that \(T = \pi_B \circ f\): for example, by \cite{halmos:normal,robinson:julia}, an isometric dilation $A \to B \oplus A$ can be constructed as
  $$
  f = \begin{pmatrix}
    T \\ (1 - T^\dagger T)^\frac{1}{2}
  \end{pmatrix}.
  $$
  Let \(F(T) = [f,G]\).
  We need to check this mapping is well-defined, \textit{i.e.}, if \(f' \colon A \to B \oplus G'\) also satisfies \(T = \pi_B \circ f'\), verify that \([f,G] = [f',G']\).
  To do this, first consider the Hilbert space \(\im(f) \subseteq B \oplus G\) and similarly \(\im(f')\) and define a function \(g \colon \im(f') \to \im(f)\) by \(f'(a) \mapsto f(a)\).
  This function is well-defined because $f$ and $f'$ are isometries and hence injective.
  It is easy to see that \(g\) is linear; also notice that
  $$
    \braket{g(f'a)}{g(f'a)} = \braket{fa}{fa} = \braket{a}{a} = \braket{f'a}{f'a}
  $$
  because \(f\) and \(f'\) are isometries.
  This applies to all vectors in \(\im(f')\), so \(g\) is an isometry; in fact, because \(\dim(\im(f)) = \dim(\im(f'))\) is finite, \(g\) has an inverse that is also an isometry, so \(g \colon \im(f') \to \im(f)\) is unitary.
  Whenever \(f(a) \in B\) it is necessary that \(f(a) = f'(a)\) for them to be dilations of \(T\), which means that \(g\) acts as the identity on \(\im(T)\). Since \(\im(f) = (\im(f) \cap B) \oplus (\im(f) \cap G)\) and similarly for \(\im(f')\), while \(\im(f) \cap B = \im(T) = \im(f') \cap B\), we can decompose \(g\) into a block matrix of the form \(g \colon \im(T) \oplus (\im(f') \cap G') \to \im(T) \oplus (\im(f) \cap G)\). Given that the component \(\im(T) \to \im(T)\) has been established to be the identity, and \(g\) is unitary, it follows that \(g\) must be of the form
  $$
     g = \begin{pmatrix}
       \id & 0 \\
       0 & h
     \end{pmatrix}
  $$
  for some unitary \(h \colon \im(f') \cap G' \to \im(f) \cap G\). Next, we lift \(h\) to a map \(k \colon G' \to G\). Assume without loss of generality that \(\dim(G') \leq \dim(G)\) and pick an isometry \(k \colon G' \to G\) satisfying \((\id_B \oplus k) \circ f' = f\).
  Such a function exists: let \(k(x) = h(x)\) whenever \(x \in \im(f') \cap G'\) and, for each element \(x \in G'\) not in \(\im(f')\), choose an element \(y \in G\) not in \(\im(f)\) and let \(k(x) = y\).
  Since \(\dim(G') \leq \dim(G)\) by assumption, the latter choices can be made so that \(k\) is an isometry.
  Then, it is immediate that \(f \medeq{L} f'\) since \(k \colon G' \to G\) is an isometry acting as their mediator, and we conclude that \([f,G] = [f',G']\).
  Moreover, if \(F(T) = [f,G]\) and \(F(S) = [g,G']\) then it's easy to check that \((g \oplus \id_G)\circ  f\) is an isometric dilation of \(S\circ T\), so \(F\) is indeed a functor.
  For any two contractions \(T\) and \(S\), if \(F(T) = F(S)\) then there is an isometric dilation \(f\) such that \(T = \pi \circ f = S\) and, hence, \(F\) is faithful.
  Finally, \(\pi_G \circ f\) is a contraction for all \([f,G]\), so \(F\) is full.
\end{proof}

\section{Hiding in a partial setting}
\label{sec:generalized_pablo_pushout}

The previous section considered (co)affine completions with respect to the direct sum. But rig categories have another monoidal structure, the tensor product. Just as we can form the \emph{additive} affine completion $\La(\cat{C})$ of a rig category $\C$, we can also form the \emph{multiplicative} one $\Lm(\cat{C})$: objects are those of $\cat{C}$, morphisms $A \to B$ are equivalence classes of morphisms $A \to B \otimes G$ in $\cat{C}$, with identities and composition just as in Definition~\ref{def-affine-comp}. Where the $\La$-construction captures \emph{partiality} by allowing part of the \emph{state space} to be hidden, the $\Lm$-construction models \emph{discarding} by allowing any \emph{state} to be hidden (partly or fully).

This construction was shown to capture Stinespring dilation in \cite{huotstaton:universal}, where the authors argued that $\Lm(\Isometry)$ is monoidally equivalent to the category $\FHilbCPTP$ of finite-dimensional Hilbert spaces and completely positive trace-preserving (CPTP) maps (or \emph{quantum channels}). A drawback is that this does not generally preserve the direct sum, so $\Lm(\cat{C})$ is generally only monoidal (under tensor product) when $\cat{C}$ is a rig category (though remnants of the direct sum persist, see \cite[Section~4.4]{heunenkaarsgaard:qie}).

Unfortunately, we cannot hope to use the $\Lm$-construction to construct the category $\FHilbCPTN$ of finite-dimensional Hilbert spaces and completely positive trace-nonincreasing (CPTN) maps (or \emph{partial quantum channels}) for a simple reason: the unit $I$ of the tensor product is terminal in $\Lm(\cat{C})$, but unlike $\FHilbCPTP$, it is not terminal in $\FHilbCPTN$. The unique map $A \to I$ in $\FHilbCPTP$ is given by the trace of a density matrix on $A$, and it is unique because all density matrices have unit trace. On the other hand, the \emph{subnormalised} density matrices found in $\FHilbCPTN$ take their traces in the unit interval, so even though there is only one \emph{trace-preserving} map $A \to I$, there are as many trace-nonincreasing ones as there are real numbers in $[0,1]$. Another way to say this is that the $\Lm$-construction fails to respect \emph{partiality} in $\FHilbCPTN$ (as also discussed in \cite{heunenkaarsgaard:bennett}).

This section generalises the result from \cite{huotstaton:universal} to the partial case. To do this, we show a variant of Stinespring dilation for completely positive trace-nonincreasing maps, describe a construction $\Lt(\cat{C})$ that extends $\cat{C}$ with unique \emph{total} deletion maps, and relate the two by showing that $\Lt(\Contraction) \simeq \FHilbCPTN$. We relate this to the total case by showing that $\Lt(\cat{C})$ has a universal property as a certain pushout in the category of (locally small) monoidal categories.

\subsection{Stinespring dilation for partial quantum channels}

We begin with a small lemma, which turns out to be incredibly useful when working with contractions.

\begin{proposition}\label{prop-contraction}
  $T$ is contractive if and only if $T^\dagger T \le 1$.
\end{proposition}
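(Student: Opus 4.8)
The plan is to reduce the statement to the defining property of the adjoint together with the definition of positive semidefiniteness, so that the whole equivalence becomes a short chain of manipulations on inner products. The central observation is that contractivity, being a statement about norms, can be squared without loss (both sides of $\norm{Tx} \le \norm{x}$ are nonnegative), and squared norms are exactly inner products of the form $\braket{x}{x}$, which is precisely the setting in which the ordering $T^\dagger T \le 1$ is phrased.

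First I would unfold both sides. On the one hand, $T$ being contractive means $\norm{Tx} \le \norm{x}$ for every vector $x$, equivalently $\norm{Tx}^2 \le \norm{x}^2$. On the other hand, $T^\dagger T \le 1$ means by definition that $1 - T^\dagger T$ is positive semidefinite, \textit{i.e.}, $\braket{x}{(1 - T^\dagger T)x} \ge 0$ for every $x$. The link between the two is the defining adjunction property $\braket{Tx}{Tx} = \braket{x}{T^\dagger T x}$, which lets me rewrite $\norm{Tx}^2 = \braket{x}{T^\dagger T x}$.

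With this in hand I would assemble the equivalence as a single reversible computation: $\norm{Tx} \le \norm{x}$ holds for all $x$ iff $\braket{x}{T^\dagger T x} \le \braket{x}{x}$ for all $x$, iff $\braket{x}{(1 - T^\dagger T)x} \ge 0$ for all $x$, iff $1 - T^\dagger T \ge 0$, iff $T^\dagger T \le 1$. Each step is an equivalence rather than a one-directional implication, so both the ``only if'' and the ``if'' directions are obtained simultaneously, and no separate argument for the converse is needed.

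Honestly there is no serious obstacle here; the result is essentially a restatement of the adjunction identity. The only point deserving a moment's care is the very first reduction from $\norm{Tx} \le \norm{x}$ to the squared inequality, which relies on the monotonicity of squaring on nonnegative reals, and the bookkeeping convention that $T^\dagger T \le 1$ refers to the Loewner order on self-adjoint operators (so that it is genuinely equivalent to the positivity of $1 - T^\dagger T$). Once those conventions are fixed, the proof is immediate.
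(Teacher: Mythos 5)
Your proposal is correct and follows essentially the same route as the paper: unfold $T^\dagger T \le 1$ as positive semidefiniteness of $1 - T^\dagger T$, use the adjunction identity to rewrite $\bra{\phi}(1 - T^\dagger T)\ket{\phi}$ as $\norm{\phi}^2 - \norm{T\phi}^2$, and observe that every step is a genuine equivalence. No substantive differences.
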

\begin{proof}
  By definition, $T^\dagger T \le 1$ iff 
  $1 - T^\dagger T$ is positive semidefinite, which 
  in turn is the case iff
  $\bra{\phi}(1 - T^\dagger T)\ket{\phi} \ge 0$  for all $\ket\phi$.
  But since
  $$\bra{\phi}(1 - T^\dagger T)\ket{\phi} =
  \bra{\phi}\ket{\phi} - \bra{\phi}T^\dagger T\ket{\phi} = \norm{\ket\phi}^2 -
  \norm{T \ket{\phi}}^2\text,$$
  $1 - T^\dagger T$ is positive semidefinite iff
  $\norm{\ket\phi}^2 - \norm{T \ket{\phi}}^2 \ge 0$ for all $\ket\phi$, \textit{i.e.}, when $T$ is contractive.
\end{proof}

The previous proposition links handily to the following theorem about the Kraus representation of completely positive trace-nonincreasing maps.

\begin{proposition}[\cite{cappellinisommerszyczkowski:subcptn}]
  Any CPTN map $\Phi$ admits a representation
  $\Phi(\rho) = \sum_{i=1}^k M_i \rho M_i^\dagger$ with $\sum_i
  M_i^\dagger M_i \le 1$.
\end{proposition}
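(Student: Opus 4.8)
The plan is to separate the two hypotheses: obtain the Kraus operators from complete positivity alone, and then extract the operator inequality $\sum_i M_i^\dagger M_i \le 1$ purely from trace-nonincreasingness. The existence of \emph{some} decomposition $\Phi(\rho) = \sum_{i=1}^k M_i \rho M_i^\dagger$ is precisely the finite-dimensional Choi–Kraus theorem for completely positive maps, which I would invoke as a known result; concretely, one passes through the Choi matrix $(\Phi \otimes \id)(\ket{\Omega}\bra{\Omega})$, which is positive semidefinite exactly when $\Phi$ is completely positive, and reads the operators $M_i$ off its spectral decomposition. This existence step is the only substantive ingredient, and I expect it to be the main obstacle — though in finite dimension it is entirely classical and citable. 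Everything after it is elementary linear algebra.

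Having fixed such a decomposition, I would compute the output trace using cyclicity, which turns the sum of Kraus terms into a single positive operator:
$$\mathrm{tr}(\Phi(\rho)) = \sum_i \mathrm{tr}(M_i \rho M_i^\dagger) = \sum_i \mathrm{tr}(M_i^\dagger M_i \rho) = \mathrm{tr}(E\rho), \qquad E := \sum_{i=1}^k M_i^\dagger M_i \ge 0.$$
The hypothesis that $\Phi$ is trace-nonincreasing states that $\mathrm{tr}(\Phi(\rho)) \le \mathrm{tr}(\rho)$ for every positive $\rho$, which rewrites as $\mathrm{tr}\big((1 - E)\rho\big) \ge 0$ for all positive semidefinite $\rho$.

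The final step is the standard fact that a Hermitian operator $X$ satisfies $\mathrm{tr}(X\rho) \ge 0$ for all positive $\rho$ if and only if $X \ge 0$: one direction is immediate, and the converse follows by testing against rank-one states $\rho = \ket{\phi}\bra{\phi}$, giving $\bra{\phi} X \ket{\phi} \ge 0$ for all $\ket{\phi}$ — exactly the rank-one argument already used in Proposition~\ref{prop-contraction}. Applying this with $X = 1 - E$ yields $1 - E \ge 0$, that is, $\sum_i M_i^\dagger M_i \le 1$, as claimed. It is worth noting how neatly this dovetails with Proposition~\ref{prop-contraction}: writing $M$ for the column-stacked operator with blocks $M_1,\dots,M_k$, one has $M^\dagger M = \sum_i M_i^\dagger M_i$, so the Kraus condition says precisely that $M$ is a contraction, which is the link the surrounding section exploits.
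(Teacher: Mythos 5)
Your proof is correct. Note, however, that the paper does not prove this statement at all: it is imported verbatim from the cited reference \cite{cappellinisommerszyczkowski:subcptn}, so there is no in-paper argument to compare against. What you have written is the standard derivation and it fills that gap soundly: the Choi--Kraus theorem (positivity of the Choi matrix $(\Phi\otimes\id)(\dyad{\Omega}{\Omega})$, Kraus operators from its spectral decomposition) gives the existence of \emph{some} decomposition $\Phi(\rho)=\sum_i M_i\rho M_i^\dagger$ from complete positivity alone; cyclicity of the trace reduces trace-nonincreasingness to $\tr\big((1-E)\rho\big)\ge 0$ for all positive $\rho$ with $E=\sum_i M_i^\dagger M_i$; and testing against rank-one projectors $\dyad{\phi}{\phi}$ yields $1-E\ge 0$. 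The one point worth being explicit about is that trace-nonincreasingness is usually stated only for density matrices (unit trace), so you should note that it extends to all positive semidefinite $\rho$ by homogeneity before specialising to arbitrary rank-one $\dyad{\phi}{\phi}$ --- a one-line remark, not a gap. Your closing observation is also apt: the column-stacked operator $M$ with $M^\dagger M = \sum_i M_i^\dagger M_i$ is, up to the unitor and the identification of $B\oplus\cdots\oplus B$ with $B\otimes\mathbb{C}^k$, exactly the contraction $T=\big(\sum_i M_i\otimes\ket{i}\big)U_R^\dagger$ built in Proposition~\ref{prop-stinespring}, and Proposition~\ref{prop-contraction} is precisely what converts the Kraus inequality into contractivity there.
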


As in the trace-preserving case, we can construct a Stinespring dilation from the Kraus representation.

\begin{proposition} \label{prop-stinespring}
  Every CPTN map $\Phi$ admits a Stinespring dilation 
  $\Phi(\rho) = \tr_E(T \rho T^\dagger)$ for some contraction $T$ and Hilbert
  space $E$.
\end{proposition}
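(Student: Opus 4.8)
The plan is to construct $T$ directly from the Kraus representation supplied by the previous proposition and then verify that it is a contraction using Proposition~\ref{prop-contraction}. Write $\Phi(\rho) = \sum_{i=1}^k M_i \rho M_i^\dagger$ with $\sum_i M_i^\dagger M_i \le 1$, where each $M_i \colon A \to B$. Take $E$ to be a $k$-dimensional Hilbert space with a fixed orthonormal basis $\{\ket{e_i}\}_{i=1}^k$, and define $T \colon A \to B \otimes E$ by stacking the Kraus operators:
$$T = \sum_{i=1}^k M_i \otimes \ket{e_i}, \qquad\text{i.e.}\qquad T\ket{\psi} = \sum_{i=1}^k (M_i\ket{\psi}) \otimes \ket{e_i}.$$

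First I would check that $T$ is contractive. Computing directly, $T^\dagger T = \sum_{i,j} M_j^\dagger M_i \braket{e_j}{e_i} = \sum_i M_i^\dagger M_i$, which is $\le 1$ by the Kraus bound; Proposition~\ref{prop-contraction} then yields contractivity of $T$ immediately. This is exactly the point at which the trace-nonincreasing hypothesis enters: in the trace-preserving case the same computation gives $T^\dagger T = 1$, an isometry, whereas here the inequality relaxes this to a genuine contraction.

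Next I would recover $\Phi$ by computing the partial trace. We have $T \rho T^\dagger = \sum_{i,j} M_i \rho M_j^\dagger \otimes \ket{e_i}\bra{e_j}$, and tracing out $E$ against the orthonormal basis annihilates the off-diagonal terms, leaving $\tr_E(T\rho T^\dagger) = \sum_{i,j} \delta_{ij}\, M_i \rho M_j^\dagger = \sum_i M_i \rho M_i^\dagger = \Phi(\rho)$, as required.

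There is no serious obstacle here: the construction is the familiar Stinespring stacking trick, and the only substantive point is that the trace-nonincreasing Kraus bound $\sum_i M_i^\dagger M_i \le 1$ translates, via Proposition~\ref{prop-contraction}, precisely into $T$ being a contraction rather than an isometry. The mild care required is purely bookkeeping: fixing the basis of $E$ so that the partial trace is computed correctly, and noting that $E$ may be taken of dimension equal to the number of Kraus operators.
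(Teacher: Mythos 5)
Your construction is exactly the one in the paper: stack the Kraus operators as $T = \sum_i M_i \otimes \ket{i}$ (the paper just inserts the right unitor $U_R^\dagger$ explicitly to make the domain $A$ rather than $A \otimes \mathbb{C}$), verify $T^\dagger T = \sum_i M_i^\dagger M_i \le 1$ via Proposition~\ref{prop-contraction}, and recover $\Phi$ by the partial trace killing the off-diagonal $\dyad{i}{j}$ terms. Correct and essentially identical to the paper's proof.
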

\begin{proof}
  Let $\B(A) \tot{\Phi} \B(B)$ be a partial quantum channel with Kraus
  representation $\Phi(\rho) = \sum_{i=1}^k M_i \rho M_i^\dagger$. Define $E =
  \mathbb{C}^k$, let $A \otimes \mathbb{C} \tot{U_R} A$ denote the right
  unitor, and let $T = \left(\sum_{i=1}^k M_i \otimes
  \ket{i}\right) U_R^\dagger$. Now for any $\rho$:
  \begin{align*}
    \tr_E(T \rho T^\dagger) & = \tr_E\left(\left(\sum_{i=1}^k M_i \otimes
    \ket{i}\right) U_R^\dagger \, \rho \, U_R \left(\sum_{j=1}^k M_j^\dagger 
    \otimes
    \bra{j}\right)\right) 
    = \tr_E\left(\sum_{i=1}^k \sum_{j=1}^k M_i \rho M_j^\dagger \otimes 
    \dyad{i}{j}\right)  \\
    & = \sum_{i=1}^k \sum_{j=1}^k \tr(\dyad{i}{j}) M_i \rho M_j^\dagger
    = \sum_{i=1}^k M_i \rho M_i^\dagger = \Phi(\rho) \enspace.
  \end{align*}
  It remains to show that $T$ is contractive:
  \begin{align*}
    T^\dagger T & = U_R \left(\sum_{j=1}^k M_j^\dagger \otimes \bra{j}\right)
    \left(\sum_{i=1}^k M_i \otimes \ket{i}\right) U_R^\dagger
    = U_R \left(\sum_{j=1}^k \sum_{i=1}^k M_j^\dagger M_i \otimes
    \bra{j}\ket{i} \right) U_R^\dagger \\
    & = U_R \left(\sum_{i=1}^k M_i^\dagger M_i \otimes \bra{i}\ket{i} \right)
    U_R^\dagger
    = \sum_{i=1}^k M_i^\dagger M_i \le 1
  \end{align*}
  which finishes the proof.
\end{proof}

This representation is essentially unique, \textit{i.e.}, unique up to an isometry applied to the ancilla.

\begin{lemma}[{\cite[Theorem 8.2 ]{nielsen_chuang_2010}}]
  Let $\Phi$ be a CP map with two Kraus representations
  $\Phi(\rho) = \sum_{i=1}^k E_i \rho E_i^\dagger$ and $\Phi(\rho) =
  \sum_{j=1}^{k'} F_j \rho F_j^\dagger$. Assume $k = k'$; if $k < k'$, add some $E_i = 0$ for all $k < i \leq k'$. There is a unitary $k$-by-$k$ matrix $U =
  (u_{ij})$ such that $E_i = \sum_{j=1}^k u_{ij} F_j$ for
  all $i$.
\end{lemma}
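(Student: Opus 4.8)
Since this is the classical ``unitary freedom'' in the operator-sum representation, the plan is to prove it in two stages: first reduce the statement about operators to an equivalent statement about vectors via vectorisation, and then establish the vector version, namely that two families realising the same positive operator as a sum of outer products are related by a unitary. Throughout I would allow the zero-padding of the hypothesis and carry it through both stages.

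First I would fix the (unnormalised) maximally entangled vector $\ket{\Omega} = \sum_m \ket{m} \otimes \ket{m}$ on $A \otimes A$ and consider the vectorisation map sending an operator $M \colon A \to B$ to $\ket{M} := (M \otimes \id)\ket{\Omega} \in B \otimes A$. This is a linear isomorphism satisfying $\braket{M}{N} = \tr(M^\dagger N)$, and under it the Choi operator $\sum_i \dyad{E_i}{E_i} = (\Phi \otimes \id)(\dyad{\Omega}{\Omega})$ depends only on $\Phi$, not on the chosen Kraus family. Padding the shorter family with zero operators (which vectorise to the zero vector) so that both have length $k$, the hypothesis that both families represent $\Phi$ becomes the single equation $\sum_{i=1}^k \dyad{E_i}{E_i} = \sum_{j=1}^k \dyad{F_j}{F_j}$. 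It then suffices to produce a $k$-by-$k$ unitary $U = (u_{ij})$ with $\ket{E_i} = \sum_j u_{ij}\ket{F_j}$; since vectorisation is linear and injective, devectorising recovers $E_i = \sum_j u_{ij} F_j$.

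For the vector statement I would write $P = \sum_i \dyad{E_i}{E_i}$ and diagonalise it as $P = \sum_{l=1}^r \dyad{\hat l}{\hat l}$ with $\{\ket{\hat l}\}$ pairwise orthogonal nonzero vectors spanning $\im(P)$, where $r = \mathrm{rank}(P) \le k$, padding this family with zero vectors up to length $k$. Each $\ket{E_i}$ lies in $\im(P)$, since any $\ket{v} \perp \im(P)$ satisfies $0 = \bra{v}P\ket{v} = \sum_i |\braket{v}{E_i}|^2$. Writing $\ket{E_i} = \sum_l c_{il}\ket{\hat l}$ and comparing the two expressions for $P$ using orthogonality of the $\ket{\hat l}$ shows that the coefficient matrix has orthonormal columns, hence extends (precisely because $r \le k$) to a $k$-by-$k$ unitary $V$ with $\ket{E_i} = \sum_l V_{il}\ket{\hat l}$. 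Running the identical argument for the $F_j$ against the \emph{same} operator $P$ yields a unitary $W$ with $\ket{F_j} = \sum_l W_{jl}\ket{\hat l}$; inverting gives $\ket{\hat l} = \sum_j \overline{W_{jl}}\ket{F_j}$, and substituting produces $\ket{E_i} = \sum_j (VW^\dagger)_{ij}\ket{F_j}$, so $U := VW^\dagger$ is the required unitary.

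The hard part will be the vector statement of the middle stage, and within it the bookkeeping forced by the zero-padding: one must verify that each $\ket{E_i}$ genuinely lies in $\im(P)$, that the coefficient matrix has orthonormal \emph{columns} so that it can be completed to a unitary exactly because $r \le k$, and that $V$ and $W$ are built against a \emph{common} diagonalisation of $P$ so that composing them into $VW^\dagger$ is legitimate. By contrast, the reduction in the first stage and the final devectorisation are routine once the isometry property $\braket{M}{N} = \tr(M^\dagger N)$ of vectorisation is in hand.
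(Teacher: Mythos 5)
Your proof is correct: the paper gives no argument of its own for this lemma, citing Nielsen--Chuang's Theorem 8.2 instead, and your vectorisation-plus-ensemble-freedom argument is essentially the standard proof found there. The key steps --- reducing to $\sum_i \dyad{E_i}{E_i} = \sum_j \dyad{F_j}{F_j}$ via the Choi operator, showing each $\ket{E_i}$ lies in $\im(P)$, extracting orthonormal columns from a common diagonalisation, and completing to unitaries $V$, $W$ with $U = VW^\dagger$ --- all check out, including the zero-padding bookkeeping.
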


\begin{proposition} \label{prop-stinespring-unique}
  Let $\Phi$ be a CPTN map, and let $T_E \colon A \to B \otimes G$, $T_F \colon
  A \to B \otimes G'$ be two contractions such that $\Phi(\rho) = \tr_G(T_E
  \rho T_E^\dagger) = \tr_{G'}(T_F \rho T_F^\dagger)$. If $\dim(G) \leq
  \dim(G')$, then there is an isometry $W \colon G \to G'$ such that $T_F =
  (1 \otimes W) T_E$.
\end{proposition}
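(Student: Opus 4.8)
The plan is to reduce the statement to the essential uniqueness of the Kraus representation recorded in the preceding lemma, by reading off Kraus operators from each Stinespring dilation. First I would fix orthonormal bases $\{\ket{i}\}_{i=1}^{k}$ of $G$ and $\{\ket{j}\}_{j=1}^{k'}$ of $G'$ (so $k = \dim G \le \dim G' = k'$), and define $E_i = (1_B \otimes \bra{i}) T_E \colon A \to B$ and $F_j = (1_B \otimes \bra{j}) T_F \colon A \to B$. Expanding the partial trace over the chosen basis gives $\Phi(\rho) = \tr_G(T_E \rho T_E^\dagger) = \sum_{i=1}^k E_i \rho E_i^\dagger$ and likewise $\Phi(\rho) = \sum_{j=1}^{k'} F_j \rho F_j^\dagger$, so that $\{E_i\}$ and $\{F_j\}$ are genuine Kraus representations of the same map. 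It is also worth recording the reconstruction formulas $T_E = \sum_{i=1}^k E_i \otimes \ket{i}$ and $T_F = \sum_{j=1}^{k'} F_j \otimes \ket{j}$, since these are what tie the Kraus data back to the dilations.

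With both Kraus families in hand, I would pad the shorter one by setting $E_i = 0$ for $k < i \le k'$ and invoke the uniqueness lemma, obtaining a $k'$-by-$k'$ unitary $U = (u_{ij})$ with $E_i = \sum_{j=1}^{k'} u_{ij} F_j$ for all $i$. Inverting this relation using $U^{-1} = U^\dagger$ yields $F_j = \sum_{i=1}^{k'} \overline{u_{ij}} E_i = \sum_{i=1}^{k} \overline{u_{ij}} E_i$, where the second equality holds because the padded operators $E_i$ vanish for $i > k$.

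Finally I would define the candidate map $W \colon G \to G'$ on basis vectors by $W\ket{i} = \sum_{j=1}^{k'} \overline{u_{ij}} \ket{j}$ for $i = 1, \ldots, k$, that is, as the first $k$ columns of $U^\dagger$. That $W$ is an isometry follows from unitarity of $U$, since $(W^\dagger W)_{i'i} = \sum_{j=1}^{k'} u_{i'j}\overline{u_{ij}} = (U U^\dagger)_{i'i} = \delta_{i'i}$. The required identity is then a direct computation:
\[
(1_B \otimes W) T_E = \sum_{i=1}^{k} E_i \otimes W\ket{i} = \sum_{j=1}^{k'}\Big(\sum_{i=1}^{k} \overline{u_{ij}}\, E_i\Big) \otimes \ket{j} = \sum_{j=1}^{k'} F_j \otimes \ket{j} = T_F.
\]

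The main obstacle I anticipate is not any deep step but the bookkeeping around the padding and the direction of the inversion: one must build $W$ from $U^\dagger$ rather than $U$, and keep careful track of which indices carry the padded zeros. A tempting but flawed shortcut would be to match the operators $E_i$ and $F_j$ coefficient-by-coefficient; since the Kraus operators need not be linearly independent, this is not legitimate, and it is cleaner to define $W$ explicitly from $U$ and verify the dilation identity directly, as above.
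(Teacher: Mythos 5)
Your proof is correct and follows essentially the same route as the paper's: read off Kraus operators $E_i = (1\otimes\bra{i})T_E$ and $F_j = (1\otimes\bra{j})T_F$ from the two dilations, pad the shorter family with zeros, invoke the Kraus-uniqueness lemma to obtain the unitary $U$, and assemble the isometry $W$ from it. Your explicit definition $W\ket{i} = \sum_j \overline{u_{ij}}\ket{j}$ together with the direct verification of $(1\otimes W)T_E = T_F$ is in fact cleaner bookkeeping than the paper's composite $W = RUV$; the only other difference is that you suppress the right unitor, which is purely cosmetic.
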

\begin{proof}
  Write $U_R$ for the right unitor.
  Fix an orthonormal basis for $G$ and define $E_i = U_R (1 \otimes \bra{i}) T_E$; do the same for $F_j = U_R (1 \otimes \bra{j}) T_F$.
  Notice that $T_E = (\sum_i E_i \otimes \ket{i}) U_R^\dagger$ thanks to $\sum_i \dyad{i}{i} = 1_G$ for an orthonormal basis.
  Take $V$ to be an injection of $G$ into a Hilbert space with the same dimension as $G'$. Let $U = (u_{ij})$ be the unitary from the lemma above. Then:
  $$
    T_E = (\sum_i E_i \otimes \ket{i})U_R^\dagger = (\sum_i (\sum_j u_{ij} F_j) \otimes \ket{i})U_R^\dagger 
     = (\sum_i (\sum_j u_{ij} F_j \otimes \ket{i}))U_R^\dagger = (\sum_j F_j \otimes (\sum_i u_{ij} \ket{i}))U_R^\dagger.
  $$
  Because $U$ is unitary, $\ket{\hat{j}} = \sum_i u_{ij} \ket{i}$ form an orthonormal basis for $G'$.
  Finally, we obtain the isometry $W \colon G \to G'$ by composing $W = R U V$ where $R$ is the unitary mapping $\ket{\hat{j}} \mapsto \ket{j}$.
\end{proof}

Moreover, as in the trace-preserving case, contractions give rise to CPTN maps through conjugation.

\begin{proposition}\label{prop-conj-cptn}
  Every contraction $T$ gives rise to a CPTN map $\Phi(\rho) = T \rho 
  T^\dagger$.
\end{proposition}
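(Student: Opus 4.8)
The plan is to verify the two defining properties of a CPTN map separately: complete positivity, and trace-nonincreasingness on density matrices. The complete positivity is essentially immediate, while the trace condition is exactly where the contractivity hypothesis enters, via Proposition~\ref{prop-contraction}.

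First I would address complete positivity. The map $\Phi(\rho) = T\rho T^\dagger$ is in Kraus form with a single Kraus operator $T$, and maps of this shape are always completely positive. To make this explicit, I would note that for any $n$ the ampliation $\Phi \otimes \id_n$ has Kraus operator $T \otimes \id_n$, so $(\Phi \otimes \id_n)(\sigma) = (T \otimes \id_n)\,\sigma\,(T \otimes \id_n)^\dagger$. Since $A \sigma A^\dagger \ge 0$ whenever $\sigma \ge 0$ for any operator $A$, this shows $\Phi \otimes \id_n$ is positive for every $n$, hence $\Phi$ is completely positive.

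Next I would handle the trace condition. For a density matrix $\rho \ge 0$, cyclicity of the trace gives
\[
  \tr(\Phi(\rho)) = \tr(T\rho T^\dagger) = \tr(T^\dagger T \rho).
\]
By Proposition~\ref{prop-contraction}, contractivity of $T$ is equivalent to $T^\dagger T \le 1$, that is, $1 - T^\dagger T \ge 0$. Since the trace of a product of two positive semidefinite operators is nonnegative, we have $\tr((1 - T^\dagger T)\rho) \ge 0$, and therefore
\[
  \tr(\rho) - \tr(\Phi(\rho)) = \tr\big((1 - T^\dagger T)\rho\big) \ge 0,
\]
so $\tr(\Phi(\rho)) \le \tr(\rho)$, as required.

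There is no real obstacle here: both properties reduce to standard facts. The only point worth flagging is that the trace-nonincreasing step is precisely the converse direction of the Stinespring correspondence established in Proposition~\ref{prop-stinespring}, so the contractivity characterisation of Proposition~\ref{prop-contraction} is doing exactly the work one expects. The complete positivity half is entirely formal and could even be omitted by appeal to the single-Kraus-operator form.
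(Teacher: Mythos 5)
Your proof is correct and follows essentially the same route as the paper: complete positivity is noted to be automatic for conjugation by any operator, and the trace bound comes from cyclicity of the trace together with $T^\dagger T \le 1$ via Proposition~\ref{prop-contraction}. Your version merely spells out the ampliation argument and the positivity of $\tr\big((1 - T^\dagger T)\rho\big)$ in more detail than the paper does.
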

\begin{proof}
Conjugation by any linear map is completely positive. That $\Phi$ is trace-nonincreasing follows by $\tr(T \rho T^\dagger) =
  \tr(\rho T^\dagger T) \le \tr(\rho \, 1) = \tr(\rho)$.
\end{proof}

Taken together, these results show that the situation between $\Contraction$ and $\FHilbCPTN$ mirrors that between $\Isometry$ and $\FHilbCPTP$: there is a (strict monoidal) functor $\Contraction \to \FHilbCPTN$ that sends a contraction $T$ to conjugation by $T$, and every CPTN map can be expressed this way (in an essentially unique way) if we allow ourselves an ancilla system.

\subsection{Total eclipse of the state}

We now formulate the new $\Lt$-construction, adding a notion of hiding that cooperates with the \emph{total} maps in a monoidal dagger category. Consider a monoidal dagger category $\cat{C}$. We can think of its dagger monomorphisms (\textit{i.e.}, morphisms $f$ satisfying $f^\dagger \circ f = \id$) as its \emph{total maps}. These form a subcategory of $\cat{C}$, which we denote $\DagMon(\cat{C})$. The significance of the total maps is that, instead of being able to mediate with \emph{arbitrary} maps from $\cat{C}$ (as in the $\Lm$-construction), we are only permitted to mediate with \emph{total} maps (\textit{i.e.}, morphisms in $\DagMon(\cat{C})$) in the $\Lt$-construction. Explicitly, for $f \colon A \to B \otimes G$ and $f' \colon A \to B \otimes G'$, write $f \le_{\Lt} f'$ iff there exists a dagger monomorphism $m \colon G \to G'$ making
% https://q.uiver.app/?q=WzAsMyxbMSwwLCJBIl0sWzAsMSwiQiBcXG90aW1lcyBHIl0sWzIsMSwiQiBcXG90aW1lcyBHJyJdLFsxLDIsIlxcaWQgXFxvdGltZXMgbSIsMl0sWzAsMSwiZiIsMl0sWzAsMiwiZiciXV0=
\[\begin{tikzcd}[ampersand replacement=\&]
	\& A \\
	{B \otimes G} \&\& {B \otimes G'}
	\arrow["{\id \otimes m}"', from=2-1, to=2-3]
	\arrow["f"', from=1-2, to=2-1]
	\arrow["{f'}", from=1-2, to=2-3]
\end{tikzcd}\]
commute. Let $\sim_{\Lt}$ denote the equivalence closure of $\le_{\Lt}$.
\begin{definition}
  The \emph{partial multiplicative affine completion} $\Lt(\cat{C})$ of a 
  symmetric monoidal dagger category $\cat{C}$ is the category whose
  \begin{itemize}
    \item objects are those of $\cat{C}$,
    \item morphisms $[f,G] \colon A \to B$ are pairs of an object $G$ and an
    equivalence class of morphisms $f \colon A \to B \otimes G$ of $\cat{C}$
    under $\sim_{\Lt}$,
    \item identities are $[\rho^{-1}_\otimes, I]$, and
    \item composition of $[f,G] \colon A \to B$ and $[g,G'] \colon B \to C$ is
    $[\alpha_\otimes \circ g \otimes \id_G \circ f, G' \otimes G]$.
  \end{itemize}
\end{definition}

Before proceeding with the universal property of this construction, we first establish that it succeeds in constructing $\FHilbCPTN$.

\begin{theorem}\label{thm-lt-contr}
  There is an equivalence $\Lt(\Contraction) \simeq \FHilbCPTN$ of monoidal categories.
\end{theorem}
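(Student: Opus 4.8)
The plan is to define a monoidal functor $F \colon \Lt(\Contraction) \to \FHilbCPTN$ that is the identity on objects and sends a morphism $[T,G] \colon A \to B$ to the CPTN map $\rho \mapsto \tr_G(T \rho T^\dagger)$, and then to show $F$ is full, faithful, and (trivially) essentially surjective. The three preceding propositions do the heavy lifting: Proposition~\ref{prop-conj-cptn} guarantees the target map is genuinely a CPTN map (so $F$ lands in $\FHilbCPTN$), Proposition~\ref{prop-stinespring} supplies a dilation of any CPTN map and hence fullness, and Proposition~\ref{prop-stinespring-unique} controls the ambiguity in the dilation and hence faithfulness. Since $F$ fixes objects it is evidently essentially surjective (indeed bijective on objects), so all the content lies at the level of morphisms.

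Well-definedness is the first thing to check: $F$ must respect the quotient $\sim_{\Lt}$. As $\sim_{\Lt}$ is the equivalence closure of $\le_{\Lt}$, it suffices to treat a single mediation step, i.e. to show that if $T' = (\id_B \otimes m) \circ T$ for a dagger monomorphism (isometry) $m \colon G \to G'$, then $\tr_{G'}(T' \rho T'^\dagger) = \tr_G(T \rho T^\dagger)$ for all $\rho$. This is exactly the statement that the partial trace is insensitive to an isometric enlargement of the discarded system: expanding $\tr_{G'}$ over an orthonormal basis $\{\ket k\}$ of $G'$ and using $m^\dagger m = \id_G$ (equivalently $\sum_k (m^\dagger\ket k)(\bra k m) = \id_G$, a resolution of the identity on $G$) collapses the sum back to $\tr_G$. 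Functoriality is then routine partial-trace bookkeeping: the identity $[\rho^{-1}_\otimes, I]$ maps to $\rho \mapsto \tr_I(\rho^{-1}_\otimes \rho (\rho^{-1}_\otimes)^\dagger) = \rho$, and for composites one checks that the dilation $\alpha_\otimes \circ (S \otimes \id_G) \circ T$ of the $\Lt$-composite, when conjugated and traced over $G' \otimes G$, reproduces $\sigma \mapsto \tr_{G'}(S\,\tr_G(T\sigma T^\dagger)\,S^\dagger)$, the composite of the two CPTN maps. Monoidality is of the same flavour: the tensor $[T,G] \otimes [T',G']$ dilates via $T \otimes T'$ (up to the symmetry regrouping the ancillas), and the partial trace factors over tensor products, so $F$ preserves $\otimes$ up to the coherence isomorphisms inherited from $\Contraction$.

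It remains to establish fullness and faithfulness. For fullness, given any CPTN map $\Phi \colon \B(A) \to \B(B)$, Proposition~\ref{prop-stinespring} produces a contraction $T \colon A \to B \otimes E$ with $\Phi(\rho) = \tr_E(T\rho T^\dagger)$, whence $\Phi = F([T,E])$. For faithfulness, suppose $F([T,G]) = F([T',G'])$, i.e. $T$ and $T'$ are two contractions dilating the same CPTN map $\Phi$. Assuming without loss of generality that $\dim(G) \le \dim(G')$, Proposition~\ref{prop-stinespring-unique} yields an isometry $W \colon G \to G'$ with $T' = (\id_B \otimes W) T$; but an isometry is precisely a dagger monomorphism in $\Contraction$, so this exhibits $T \le_{\Lt} T'$, and therefore $[T,G] = [T',G']$ in $\Lt(\Contraction)$.

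The conceptual crux — and the main obstacle — is establishing that $\sim_{\Lt}$ matches equality of CPTN maps exactly, and this is where the choice to quotient only by \emph{total} maps (dagger monomorphisms) in the $\Lt$-construction pays off. Well-definedness shows that isometric mediators are identified by $F$ (partial trace is isometry-invariant), while faithfulness shows, conversely, that any two dilations of the same channel differ precisely by such an isometry; the two directions fit together only because Proposition~\ref{prop-stinespring-unique} delivers an \emph{isometric} comparison map rather than an arbitrary one. Beyond this, I expect the genuinely fiddly (though conceptually routine) labour to be the composition and monoidality calculations, which require carefully tracking the partial traces over the composite ancilla $G' \otimes G$ together with the associators and symmetries relating it to the iterated discards; these are best handled diagrammatically.
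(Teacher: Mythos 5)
Your proof is correct and follows essentially the same route as the paper's: the same identity-on-objects functor $[T,G]\mapsto(\rho\mapsto\tr_G(T\rho T^\dagger))$, with well-definedness coming from isometry-invariance of the partial trace, fullness from Proposition~\ref{prop-stinespring}, and faithfulness from Proposition~\ref{prop-stinespring-unique}. You merely spell out the well-definedness, composition, and monoidality checks that the paper's proof leaves implicit (and your placement of the dagger in the conjugation is the one consistent with Propositions~\ref{prop-stinespring} and~\ref{prop-conj-cptn}).
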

\begin{proof}
  Construct a functor $\Lt(\Contraction) \to \FHilbCPTN$ acting as the
  identity on objects, by sending $[f,G] \colon A \to B$ to the map $\rho \mapsto
  \tr_G(f^\dagger \rho f)$, which is CPTN by Proposition~\ref{prop-conj-cptn}
  and since the partial trace is trace-preserving. This is well-defined since 
  dagger monomorphisms in $\Contraction$ are precisely the isometries, and since
  Stinespring dilations are invariant under isometric manipulation of the 
  ancilla system $G$. This functor is essentially surjective since it is
  identity on objects and $\Lt(\Contraction)$ and $\FHilbCPTN$ have the same
  objects; it is full since every CPTN map admits a Stinespring dilation (by 
  Proposition~\ref{prop-stinespring}); it is faithful since different
  Stinespring dilations of the same CPTN map are always connected by an 
  isometry on the ancilla by Proposition~\ref{prop-stinespring-unique}; and it 
  is (strict) monoidal since it preserves coherence isomorphisms.
\end{proof}

An immediate consequence of the definition of $\Lt(\cat{C})$ is that each object comes equipped with a discarding map and chosen projections.

\begin{proposition}
  Every object $A$ of $\Lt(\cat{C})$ has a discarding map $\discard : A \to I$, giving canonical projections $\pi_1 \colon A \otimes B \to A$ and $\pi_2 \colon A \otimes B \to B$.
\end{proposition}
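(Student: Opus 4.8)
The plan is to exhibit the discarding map by hand as a single equivalence class and then read off the projections as tensorings of it with identities. First I would set the discarding map $\diamond_A \colon A \to I$ to be the class $[\lambda_\otimes^{-1}, A]$, whose ancilla is all of $A$ and whose representing morphism is the inverse left unitor $\lambda_\otimes^{-1} \colon A \to I \otimes A$ of $\cat{C}$. Because every single morphism of $\cat{C}$ of the right type determines a well-defined $\sim_{\Lt}$-class, this is at once a legitimate morphism of $\Lt(\cat{C})$; conceptually it hides the entire state in the ancilla, leaving nothing behind, and this is the \emph{total} deletion map promised in the section preamble.

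Next I would obtain the projections from the symmetric monoidal structure that $\Lt(\cat{C})$ inherits from the symmetric monoidal dagger category $\cat{C}$. Abstractly I set
$$\pi_1 = \rho_\otimes \circ (\id_A \otimes \diamond_B), \qquad \pi_2 = \lambda_\otimes \circ (\diamond_A \otimes \id_B),$$
with $\rho_\otimes, \lambda_\otimes$ the images in $\Lt(\cat{C})$ of the unitors of $\cat{C}$. Unwinding the tensor and the composition rule, I expect these to collapse to the transparent representatives
$$\pi_1 = [\id_{A \otimes B}, B] \colon A \otimes B \to A, \qquad \pi_2 = [\sigma_\otimes, A] \colon A \otimes B \to B,$$
where $\sigma_\otimes \colon A \otimes B \to B \otimes A$ is the symmetry: $\pi_1$ discards $B$ by sliding it into the ancilla, while $\pi_2$ discards $A$ after a swap. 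As explicit classes these are plainly morphisms of $\Lt(\cat{C})$, so existence is immediate, and the only real content is matching them with the abstract composites above.

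The main obstacle — such as it is — lies precisely in that matching: one must unfold the tensor product of two morphisms of $\Lt(\cat{C})$ (which reshuffles the two ancillas into a single $G \otimes G'$ using associators and the symmetry) together with the composition $[g,G'] \circ [f,G] = [\alpha_\otimes \circ (g \otimes \id_G) \circ f, \, G' \otimes G]$, and then cancel the accumulated coherence isomorphisms against $\rho_\otimes$ and $\lambda_\otimes$. This is a routine if slightly tedious diagram chase, and coherence for symmetric monoidal categories guarantees that all the rebracketings and swaps compose to the claimed representatives, so no genuine difficulty arises. As a sanity check I would observe that under the equivalence $\Lt(\Contraction) \simeq \FHilbCPTN$ of Theorem~\ref{thm-lt-contr} the map $\diamond_A$ becomes the trace and $\pi_1, \pi_2$ become the partial-trace projections $\rho \mapsto \tr_B(\rho)$ and $\rho \mapsto \tr_A(\rho)$, confirming that these maps deserve their names.
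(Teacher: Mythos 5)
Your proposal is correct and matches the paper's proof essentially exactly: the paper also takes the discarding map to be $[\lambda_\otimes^{-1},A]$ and gives the projections as the classes $[\id_{A\otimes B},B]$ and $[\sigma_\otimes,A]$. Your additional step of deriving these representatives from the composites $\rho_\otimes \circ (\id_A \otimes \diamond_B)$ and $\lambda_\otimes \circ (\diamond_A \otimes \id_B)$ is a harmless (and correct) elaboration that the paper leaves implicit in the word ``canonical''.
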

\begin{proof}
  As in $\La$, construct $\discard : A \to I$ as the equivalence class
  of the pair $[\lambda_\otimes^{-1},A]$ where $\lambda_\otimes^{-1} \colon A \to I \otimes A$ is the inverse left unitor of $\cat{C}$. Projections are given by the equivalence class of $[\id_{A \otimes B},B] \colon A \otimes B \to A$ and that of $[\sigma_\otimes,A] \colon A \otimes B \to B$.
\end{proof}

We state some basic properties of this construction, shown analogously to those for the $\La$-construction.

\begin{proposition}
  When $\cat{C}$ is a symmetric monoidal dagger category, $\Lt(\cat{C})$ is 
  symmetric monoidal.
\end{proposition}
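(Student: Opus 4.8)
The plan is to show that $\Lt(\cat{C})$ inherits its symmetric monoidal structure from the tensor product $(\otimes, I)$ of $\cat{C}$, essentially in the same way that the $\La$-construction inherits the direct sum. Since $\Lt(\cat{C})$ is built by hiding part of a morphism's target under the \emph{tensor} product (whereas $\La$ hid under the \emph{direct sum}), the relevant monoidal structure to transport is $(\otimes, I)$ rather than $(\oplus, O)$, but the formal bookkeeping is identical. First I would define the monoidal product on objects to agree with that of $\cat{C}$, namely $A \otimes B$, and on morphisms $[f, G] \colon A \to B$ and $[g, G'] \colon A' \to B'$ define
\[
  [f, G] \otimes [g, G'] = [\gamma \circ (f \otimes g), G \otimes G'],
\]
where $\gamma$ is the appropriate coherence isomorphism reassociating and permuting $(B \otimes G) \otimes (B' \otimes G')$ into $(B \otimes B') \otimes (G \otimes G')$, built from the associators and the symmetry $\sigma_\otimes$ of $\cat{C}$. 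The monoidal unit is $I$, and the coherence isomorphisms (associator, unitors, symmetry) are taken to be the images under $\cF \colon \cat{C} \to \Lt(\cat{C})$ of the corresponding coherence isomorphisms of $\cat{C}$.

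The key steps are routine verifications. First I would check that $\otimes$ is well-defined on equivalence classes: if $f \le_{\Lt} \tilde f$ via a dagger monomorphism $m \colon G \to \tilde G$ and $g \le_{\Lt} \tilde g$ via $m' \colon G' \to \tilde G'$, then $m \otimes m'$ is again a dagger monomorphism (since $(m \otimes m')^\dagger (m \otimes m') = (m^\dagger m) \otimes (m'^\dagger m') = \id \otimes \id = \id$), and it serves as a mediator witnessing $[f,G] \otimes [g,G'] \le_{\Lt} [\tilde f, \tilde G] \otimes [\tilde g, \tilde G']$; passing to the equivalence closure $\sim_{\Lt}$ then shows independence of representatives. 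Second, I would verify bifunctoriality of $\otimes$ — that it preserves identities and respects composition — which amounts to a diagram chase using naturality of the coherence isomorphisms of $\cat{C}$ together with the explicit composition formula from the definition of $\Lt(\cat{C})$. Third, I would check that the transported coherence isomorphisms satisfy the pentagon, triangle, and hexagon axioms; these follow immediately because they already hold in $\cat{C}$ and $\cF$ is (strictly) monoidal, so coherence in $\Lt(\cat{C})$ reduces to coherence in $\cat{C}$. Naturality of these isomorphisms in $\Lt(\cat{C})$ likewise reduces to naturality in $\cat{C}$ modulo the presence of the hidden parts, which the symmetry $\sigma_\otimes$ shuffles harmlessly.

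The main obstacle I expect is the well-definedness step together with the naturality of the coherence isomorphisms: one must confirm that the mediators used to quotient morphisms interact correctly with the tensor product, precisely that tensoring dagger monomorphisms yields dagger monomorphisms (handled by the computation above) and that the reassociating/permuting isomorphism $\gamma$ commutes past the mediators. Everything else is a mechanical transfer of the already-established symmetric monoidal structure of $\cat{C}$ along $\cF$, analogous to the $\La$-case, so I would state the structure explicitly and indicate that the axiom checks proceed exactly as for $\La(\cat{C})$, referring to the direct-sum analogue rather than repeating the full calculation.
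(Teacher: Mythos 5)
Your proposal is correct and matches the approach the paper intends: the paper omits this proof, stating only that it is ``shown analogously to those for the $\La$-construction,'' and your argument is exactly that analogy spelled out, including the one point that genuinely needs checking in the $\Lt$-case, namely that the mediator class (dagger monomorphisms) is closed under $\otimes$ and contains the (unitary, hence dagger monic) coherence isomorphisms so that the reshuffled hidden parts can be identified. The only quibble is notational: the embedding $\cat{C} \to \Lt(\cat{C})$ is called $\mathcal{E}_t$ in the paper, not $\mathcal{F}$.
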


\begin{proposition}
  There is a strict monoidal functor $\mathcal{E}_t \colon \cat{C} \to \Lt(\cat{C})$ 
  given by $\mathcal{E}_t(A) = A$ on objects and $\mathcal{E}_t(f) = 
  [\rho^{-1}_\otimes \circ f, I]$ on morphisms.
\end{proposition}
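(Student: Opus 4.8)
The plan is to verify directly that $\mathcal{E}_t$ is functorial and strictly monoidal, exploiting the fact that every discrepancy between ancilla objects occurring in these checks is a tensor factor of the unit $I$, while the unitors and associators of $\cat{C}$ are unitary (coherence isomorphisms of a dagger monoidal category satisfy $\varphi^{-1} = \varphi^\dagger$), hence are dagger monomorphisms and so qualify as $\sim_{\Lt}$-mediators.

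First I would check functoriality. Preservation of identities is immediate: $\mathcal{E}_t(\id_A) = [\rho_\otimes^{-1} \circ \id_A, I] = [\rho_\otimes^{-1}, I]$ is exactly the identity of $\Lt(\cat{C})$. For composition, given $f \colon A \to B$ and $g \colon B \to C$, unfolding the composition rule of $\Lt(\cat{C})$ yields
$$\mathcal{E}_t(g) \circ \mathcal{E}_t(f) = \left[\alpha_\otimes \circ \left((\rho_\otimes^{-1} \circ g) \otimes \id_I\right) \circ (\rho_\otimes^{-1} \circ f),\ I \otimes I\right],$$
whose underlying morphism I would compare with $\rho_\otimes^{-1} \circ g \circ f$, the underlying morphism of $\mathcal{E}_t(g \circ f)$. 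Pushing $g$ past the right unitor by naturality and then collapsing the pair of unit ancillas by the triangle identity shows that $(\id_C \otimes \rho_\otimes^{-1}) \circ \rho_\otimes^{-1} \circ g \circ f$ equals the composite above; since $\rho_\otimes^{-1} \colon I \to I \otimes I$ is unitary it is a dagger-monomorphism mediator witnessing $\mathcal{E}_t(g \circ f) \le_{\Lt} \mathcal{E}_t(g) \circ \mathcal{E}_t(f)$, so the two classes coincide.

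Next I would establish strict monoidality. Because $\mathcal{E}_t$ is the identity on objects we have $\mathcal{E}_t(I) = I$ and $\mathcal{E}_t(A \otimes B) = \mathcal{E}_t(A) \otimes \mathcal{E}_t(B)$ strictly, so the coherence comparison morphisms are identities; it remains to see that $\mathcal{E}_t$ preserves the tensor of morphisms and the coherence isomorphisms. For $f \colon A \to B$ and $g \colon C \to D$, the monoidal product of $\mathcal{E}_t(f)$ and $\mathcal{E}_t(g)$ in $\Lt(\cat{C})$ has underlying morphism obtained from $(\rho_\otimes^{-1} \circ f) \otimes (\rho_\otimes^{-1} \circ g)$ by the interchange isomorphism collecting the two ancillas, which differs from $\rho_\otimes^{-1} \circ (f \otimes g)$, the underlying morphism of $\mathcal{E}_t(f \otimes g)$, only by the unitary isomorphism $I \cong I \otimes I$; this again serves as a mediator, so the classes agree. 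Finally, each coherence isomorphism $\varphi$ of $\cat{C}$ is sent to $[\rho_\otimes^{-1} \circ \varphi, I]$, which is exactly the corresponding coherence isomorphism of $\Lt(\cat{C})$, whose monoidal structure is inherited from $\cat{C}$ with trivial (unit) ancilla.

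The computations themselves are entirely routine; the only thing requiring care is the bookkeeping of associators and unitors in the composition and interchange steps. The single conceptual point, already flagged in the strategy, is that coherence isomorphisms of a dagger monoidal category are unitary, so the unavoidable mismatches between ancilla objects of the form $I$ and $I \otimes I$ can always be absorbed into $\sim_{\Lt}$. This is why, in contrast to the ordinary $\Lm$-construction, restricting mediators to dagger monomorphisms causes no difficulty here.
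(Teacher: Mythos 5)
Your proof is correct and fills in exactly the ``straightforward calculations'' that the paper leaves implicit (it only remarks that these properties are shown analogously to the $\La$-case): preservation of identities is definitional, composition and tensor are handled by naturality of the unitors plus the triangle identity, and the resulting $I \cong I \otimes I$ mismatch is absorbed by $\sim_{\Lt}$ because that coherence isomorphism is unitary and hence a dagger monomorphism. Your explicit observation that the restriction of mediators to dagger monomorphisms is harmless here is the one point genuinely worth making, and you make it.
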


The connection between the $\Lm$ and the $\Lt$ construction is made clear by the following inclusion.

\begin{proposition}
  There is a strict monoidal inclusion functor $I_t \colon \Lm(\DagMon(\cat{C})) \to 
  \Lt(\cat{C})$.
\end{proposition}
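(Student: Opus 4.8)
The plan is to define $I_t$ as the identity on objects and to send a morphism $[f,G] \colon A \to B$ of $\Lm(\DagMon(\cat{C}))$ — where the representative $f \colon A \to B \otimes G$ is a dagger monomorphism — to the class $[f,G] \colon A \to B$ of $\Lt(\cat{C})$, now regarding $f$ merely as a morphism of $\cat{C}$. The conceptual point is that both constructions mediate along the same class of maps: in $\Lm(\DagMon(\cat{C}))$ the mediators are the morphisms of $\DagMon(\cat{C})$, that is, the dagger monomorphisms of $\cat{C}$, and these are exactly the mediators permitted in $\Lt(\cat{C})$.

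First I would check that $I_t$ is well defined on equivalence classes. Since every mediator witnessing the $\Lm(\DagMon(\cat{C}))$-equivalence is a dagger monomorphism of $\cat{C}$, it is an admissible mediator for $\le_{\Lt}$; hence the $\Lm(\DagMon(\cat{C}))$-equivalence is contained in $\sim_{\Lt}$, and $I_t$ descends to classes. Functoriality is then immediate: identities are $[\rho^{-1}_\otimes, I]$ on both sides, and the composition formula $[\alpha_\otimes \circ g \otimes \id_G \circ f, G' \otimes G]$ is literally the same in the two categories, being built from the tensor and the coherence isomorphisms inherited from $\cat{C}$. Strict monoidality follows for the same reason: on objects $I_t(A \otimes B) = A \otimes B = I_t(A) \otimes I_t(B)$, and on morphisms the tensor is given by the identical formula — interchanging the two ancillas via coherence isomorphisms, which are unitary and so lie in $\DagMon(\cat{C})$ — so $I_t$ preserves it on the nose.

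The substantive part is faithfulness, which is what exhibits $I_t$ as an inclusion of a (non-full) subcategory rather than a mere functor. Here I would isolate the key lemma that dagger monomorphisms preserve totality of dilations in both directions: if $m \colon G \to G'$ satisfies $m^\dagger m = \id$ and $f' = (\id \otimes m) \circ f$, then $(f')^\dagger f' = f^\dagger (\id \otimes m^\dagger m) f = f^\dagger f$, so $f'$ is a dagger monomorphism if and only if $f$ is. Consequently, any zig-zag of $\le_{\Lt}$-steps in $\Lt(\cat{C})$ that begins at a total map $f$ passes only through total maps, and all of its mediators are dagger monomorphisms; such a zig-zag is therefore already a chain witnessing equivalence in $\Lm(\DagMon(\cat{C}))$. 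Thus, for total $f$ and $f'$, $f \sim_{\Lt} f'$ forces $f$ and $f'$ to be identified in $\Lm(\DagMon(\cat{C}))$, giving faithfulness; together with injectivity on objects (as $I_t$ is the identity there) this makes $I_t$ an inclusion. I expect this totality-preservation lemma to be the crux: it is precisely what makes the two equivalence relations agree once $\Lt(\cat{C})$ is restricted to dilations that are themselves total, and hence what realises $\Lm(\DagMon(\cat{C}))$ as a subcategory of $\Lt(\cat{C})$.
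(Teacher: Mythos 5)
Your proposal is correct. The paper in fact states this proposition without proof, so there is nothing to compare against line by line; your argument supplies exactly the content the authors leave implicit. The one genuinely non-trivial point is faithfulness, and your totality-preservation lemma --- that $(\id \otimes m)^\dagger(\id \otimes m) = \id \otimes m^\dagger m = \id$ forces $f^\dagger f = ((\id\otimes m)f)^\dagger((\id\otimes m)f)$, so every morphism in a $\le_{\Lt}$-zig-zag starting from a dagger monomorphism is again a dagger monomorphism and the whole zig-zag already lives in $\Lm(\DagMon(\cat{C}))$ --- is the right crux and is argued correctly in both directions of each zig-zag step. The well-definedness, functoriality, and strict monoidality claims are routine as you say, using that dagger monomorphisms are closed under composition and tensor and contain the (unitary) coherence isomorphisms.
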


Indeed, as we will see momentarily, the $\Lt$-construction is characterised by this inclusion.
Interestingly, this construction can be seen as a particular instance of the \emph{monoidal indeterminates}-construction~\cite{hermidatennent:indeterminates}. This gives a very useful factorisation lemma, which may be regarded as an instance of \emph{purification}.

\begin{lemma}
  Let $\cat{C}$ be a symmetric monoidal dagger category. Then
  \begin{enumerate}[(i)]
    \item morphisms $[f,E]$ of $\Lm(\DagMon(\cat{C}))$ factor uniquely as $\pi_1
    \circ \mathcal{E}(f')$ for some $f'$ in $\DagMon(\cat{C})$,
    \item morphisms $[f,E]$ of $\Lt(\cat{C})$ factor uniquely as $\pi_1
    \circ \mathcal{E}_t(f')$ for some $f'$ in $\cat{C}$, and
    \item morphisms $[f,E]$ of $\Lt(\cat{C})$ factor uniquely as
    $I_t(\pi_1) \circ \mathcal{E}(f')$ for some $f'$ in $\cat{C}$.
  \end{enumerate}
\end{lemma}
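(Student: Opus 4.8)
The plan is to treat all three clauses as one and the same \emph{purification} principle, differing only in the ambient category and hence in the admissible class of mediators: in (i) the pure part ranges over $\DagMon(\cat{C})$ and mediators are arbitrary morphisms of $\DagMon(\cat{C})$, i.e.\ dagger monomorphisms of $\cat{C}$; in (ii) and (iii) the pure part ranges over all of $\cat{C}$ while mediators remain dagger monomorphisms (this is exactly the data distinguishing $\Lt(\cat{C})$ from its pure sublayer $\Lm(\DagMon(\cat{C}))$, and the reason the inclusion $I_t$ exists). For each clause I would prove existence by a direct computation with the canonical candidate $f' = f$, and then read off uniqueness from the fact that a morphism of the completion \emph{is} an equivalence class of pure maps, so the factorisation returns precisely the defining data.

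For existence I would compute the composite in the model case (ii). Writing $f' \colon A \to B \otimes G$, we have $\mathcal{E}_t(f') = [\rho_\otimes^{-1} \circ f', I]$ and $\pi_1 = [\id_{B \otimes G}, G]$, so the composition rule gives
\[
  \pi_1 \circ \mathcal{E}_t(f') = [\alpha_\otimes \circ (\id_{B \otimes G} \otimes \id_I) \circ \rho_\otimes^{-1} \circ f',\ G \otimes I] = [(\id_B \otimes \rho_\otimes^{-1}) \circ f',\ G \otimes I],
\]
the last equality by the coherence identity $\alpha_\otimes \circ \rho_\otimes^{-1} = \id_B \otimes \rho_\otimes^{-1}$. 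Since $\rho_\otimes^{-1} \colon G \to G \otimes I$ is unitary, hence a dagger monomorphism, it is an admissible mediator witnessing $f' \le_{\Lt} (\id_B \otimes \rho_\otimes^{-1}) \circ f'$, so $\pi_1 \circ \mathcal{E}_t(f') = [f', G]$; taking $f' = f$ yields $\pi_1 \circ \mathcal{E}_t(f) = [f,E]$. Clause (i) is the identical calculation performed inside $\Lm(\DagMon(\cat{C}))$: there $f$ is already a morphism of $\DagMon(\cat{C})$ and $\rho_\otimes^{-1}$ is unitary, hence lies in $\DagMon(\cat{C})$ and is again an admissible mediator. For clause (iii) I would observe that the two completions share the \emph{same} class of mediators (dagger monomorphisms), so the strict monoidal inclusion $I_t$ carries the projection $[\id_{B \otimes G}, G]$ of $\Lm(\DagMon(\cat{C}))$ to the projection $\pi_1$ of $\Lt(\cat{C})$; clause (iii) then follows from (ii) by replacing $\pi_1$ with $I_t(\pi_1)$.

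Uniqueness I would phrase as uniqueness \emph{up to the defining equivalence}: if $\pi_1 \circ \mathcal{E}_t(f'_1) = \pi_1 \circ \mathcal{E}_t(f'_2)$ then, by the existence computation, $[f'_1, G_1] = [f'_2, G_2]$, i.e.\ $f'_1 \sim_{\Lt} f'_2$, so the pure part is determined precisely up to the action of dagger monomorphisms on the environment object (and, for (i), up to morphisms of $\DagMon(\cat{C})$). Conceptually this is the characteristic factorisation of the monoidal-indeterminates construction, of which these completions are instances, and it is the abstract shadow of the essential uniqueness of Stinespring dilations recorded in Proposition~\ref{prop-stinespring-unique}. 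I expect the main obstacle to be precisely this bookkeeping around uniqueness: since $\sim_{\Lt}$ is the \emph{equivalence closure} of the mediator preorder $\le_{\Lt}$, the honest statement is uniqueness up to a zigzag of dagger monomorphisms rather than a single one, and some care is needed both to state this correctly and to verify that the coherence isomorphisms suppressed above (the identity relating $\rho$ and $\alpha$, and the $\alpha_\otimes$-reassociations in the composition rule) really are unitary and hence legitimate mediators throughout.
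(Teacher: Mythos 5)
Your proposal is correct and is essentially the paper's argument made explicit: the paper disposes of (i) and (ii) by citing the ``expansion--raw'' factorisation of the monoidal-indeterminates construction of Hermida and Tennent, which is exactly the computation $\pi_1 \circ \mathcal{E}_t(f) = [(\id_B \otimes \rho_\otimes^{-1})\circ f, E \otimes I] = [f,E]$ that you carry out, and it derives (iii) from (ii) via $\pi_1 = I_t(\pi_1)$ just as you do. Your closing caveat --- that ``uniquely'' must be read up to the defining equivalence $\sim_{\Lt}$ (a zigzag of dagger-monic mediators), since $\mathcal{E}_t$ need not be faithful --- is a fair and accurate gloss on what the cited factorisation actually delivers.
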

\begin{proof}
  (i,ii) are the expansion-raw factorisation of~\cite{hermidatennent:indeterminates}, and (iii) holds as $\pi_1 = I_t(\pi_1)$ in $\Lt(\cat{C})$.
\end{proof}

We are now ready to establish the universal property of $\Lt(\cat{C})$. 

\begin{theorem}\label{thm-pushout}
  $\Lt(\cat{C})$ is a pushout of $\mathcal{E} \colon \DagMon(\cat{C}) \to
  \Lm(\DagMon(\cat{C}))$ along the inclusion functor $\DagMon(\cat{C})
  \rightarrowtail \cat{C}$ in the category of locally small symmetric monoidal 
  categories and strong monoidal functors.
  % https://q.uiver.app/?q=WzAsNSxbMCwwLCJcXERhZ01vbihcXG1hdGhiZntDfSkiXSxbMSwwLCJcXG1hdGhiZntDfSJdLFswLDEsIlxcTG0oXFxEYWdNb24oXFxtYXRoYmZ7Q30pKSJdLFsxLDEsIkxfdChcXG1hdGhiZntDfSkiXSxbMiwyLCJcXG1hdGhiZntEfSJdLFswLDEsIkkiLDAseyJzdHlsZSI6eyJ0YWlsIjp7Im5hbWUiOiJtb25vIn19fV0sWzAsMiwiXFxtYXRoY2Fse0V9IiwyXSxbMiwzLCJJX3QiLDIseyJzdHlsZSI6eyJ0YWlsIjp7Im5hbWUiOiJtb25vIn19fV0sWzEsMywiXFxtYXRoY2Fse0V9X3QiXSxbMiw0LCJGX3QiLDIseyJjdXJ2ZSI6Mn1dLFsxLDQsIkYiLDAseyJjdXJ2ZSI6LTJ9XSxbMyw0LCJcXGhhdHtGfSIsMCx7InN0eWxlIjp7ImJvZHkiOnsibmFtZSI6ImRvdHRlZCJ9fX1dXQ==
  \[\begin{tikzcd}[ampersand replacement=\&]
  	{\DagMon(\mathbf{C})} \& {\mathbf{C}} \\
  	{\Lm(\DagMon(\mathbf{C}))} \& {\Lt(\mathbf{C})} \\
  	\&\& {\mathbf{D}}
  	\arrow["I", tail, from=1-1, to=1-2]
  	\arrow["{\mathcal{E}}"', from=1-1, to=2-1]
  	\arrow["{I_t}"', tail, from=2-1, to=2-2]
  	\arrow["{\mathcal{E}_t}", from=1-2, to=2-2]
  	\arrow["{F}"', curve={height=12pt}, from=2-1, to=3-3]
  	\arrow["F_t", curve={height=-12pt}, from=1-2, to=3-3]
  	\arrow["{\hat{F}}", dotted, from=2-2, to=3-3]
  \end{tikzcd}\]
\end{theorem}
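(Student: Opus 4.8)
The plan is to check the two defining conditions of a pushout: that the square commutes, and that it satisfies the universal property, using the preceding factorisation lemma (clause (iii)) as the tool that both forces and constructs the mediating functor $\hat{F}$. Commutativity is immediate on objects, where all four functors act as the identity. On a morphism $f$ of $\DagMon(\cat{C})$ both composites $I_t \circ \mathcal{E}$ and $\mathcal{E}_t \circ I$ return the class $[\rho^{-1}_\otimes \circ f, I]$ in $\Lt(\cat{C})$, since $\mathcal{E}$ and $\mathcal{E}_t$ are given by the same formula and $I_t$ is the evident inclusion; hence $I_t \circ \mathcal{E} = \mathcal{E}_t \circ I$.

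For the universal property, fix a locally small symmetric monoidal category $\mathbf{D}$ with strong monoidal functors $F \colon \Lm(\DagMon(\cat{C})) \to \mathbf{D}$ and $F_t \colon \cat{C} \to \mathbf{D}$ satisfying $F \circ \mathcal{E} = F_t \circ I$. Since $\mathcal{E}_t$ and $I_t$ are the identity on objects and the square commutes, $F$ and $F_t$ agree on objects, so there is only one possible action on objects, namely $\hat{F}(A) = F_t(A) = F(A)$. By the factorisation lemma every morphism of $\Lt(\cat{C})$ can be written as $I_t(\pi_1) \circ \mathcal{E}_t(f')$ with $f'$ a morphism of $\cat{C}$ and $\pi_1$ a chosen projection in $\Lm(\DagMon(\cat{C}))$; the requirements $\hat{F} \circ I_t = F$ and $\hat{F} \circ \mathcal{E}_t = F_t$ then force $\hat{F}(I_t(\pi_1) \circ \mathcal{E}_t(f')) = F(\pi_1) \circ F_t(f')$. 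Taking this as the definition of $\hat{F}$ yields uniqueness for free and reduces existence to checking that the assignment is legitimate: that it respects $\sim_{\Lt}$, is functorial, and is strong monoidal.

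I expect the main obstacle to be well-definedness, that is, independence of the chosen factorisation, equivalently that $\hat{F}$ respects $\sim_{\Lt}$. It suffices to check the generating relation: two representatives of the same morphism differ by a mediating \emph{dagger monomorphism} $m$, so the corresponding purifications are related by $f'' = (\id \otimes m) \circ f'$, and I must show $F(\pi_1) \circ F_t(f') = F(\pi_1') \circ F_t(f'')$. Because $m$ is total, the projections satisfy $\pi_1 = \pi_1' \circ \mathcal{E}(\id \otimes m)$ in $\Lm(\DagMon(\cat{C}))$ by naturality of discarding with respect to total maps, and applying $F$ together with the hypothesis $F \circ \mathcal{E} = F_t \circ I$ converts this into the required identity in $\mathbf{D}$. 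This is precisely where the compatibility condition does essential work: the mediators live in the overlap $\DagMon(\cat{C})$, and the two legs $F$ and $F_t$ must agree there for the discard-past-total-map slide to be consistent.

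The remaining verifications are comparatively routine. Functoriality on composites reduces, after expanding both sides through their factorisations and using that $F_t$ is strong monoidal, to bifunctoriality of the tensor in $\mathbf{D}$: a discarded ancilla and a pure map act on disjoint tensor factors, so they commute automatically. Preservation of identities follows from the unitor bookkeeping of the trivial factorisation, and strong monoidality of $\hat{F}$ is inherited from that of $F_t$ on objects, with the structure isomorphisms transported along the identity-on-objects action and their naturality checked on the factorised generators. Alternatively, one may observe that $\Lt(\cat{C})$ is an instance of the monoidal indeterminates construction and read off the universal property from that of \cite{hermidatennent:indeterminates}, but the factorisation argument above keeps the proof self-contained.
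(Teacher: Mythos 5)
Your proposal is correct and follows essentially the same route as the paper: both establish commutativity of the square from the functors being identity-on-objects strict monoidal inclusions/restrictions, define $\hat{F}$ on objects by the forced agreement of $F$ and $F_t$ and on morphisms by $\hat{F}([f,E]) = F(\pi_1) \circ F_t(f)$, and obtain the triangle identities and uniqueness from the factorisation lemma. The one point you treat more explicitly than the paper is well-definedness with respect to $\sim_{\Lt}$ (sliding the discard map past a total mediator and using $F \circ \mathcal{E} = F_t \circ I$); the paper delegates this to the uniqueness clause of the factorisation lemma, so this is a useful elaboration rather than a different argument.
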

\begin{proof}[Proof sketch]
  Notice first that the upper square commutes since all functors involved
  are strict monoidal, and $I$ and $I_t$ are merely inclusions behaving as the
  identity on objects and morphisms, while $\mathcal{E}$ is precisely
  $\mathcal{E}_t$ restricted to dagger monomorphisms of $\cat{C}$.
  
  Next, since objects on $\cat{C}$, $\DagMon(\cat{C})$, $\Lt(\cat{C})$, and
  $\Lm(\DagMon(\cat{C}))$ all coincide, $F$ and $F_t$ must agree on objects, so 
  we may
  define $\hat{F}(X) = F(X) = F_t(X)$ on objects, and $\hat{F} \circ I_t = F$
  and $\hat{F} \circ \mathcal{E}_t = F_t$ on objects follows immediately. On
  morphisms we define
  $\hat{F}([f,E]) = F(\pi_1) \circ F_t(f)$. That this definition satisfies 
  $\hat{F} \circ I_t = F$ and $\hat{F} \circ \mathcal{E}_t = F_t$ on morphisms 
  as well, and that $\hat{F}$ is unique, follows using the factorisation lemma 
  above (full proof in the appendix).
\end{proof}

Instantiating this property to the case of $\Contraction$, where the functors $\Contraction \to \Isometry$ and $\FHilbCPTP \to \FHilbCPTN$ are inclusions, and the functors $\Isometry \to \FHilbCPTP$ and $\Contraction \to \FHilbCPTN$ conjugating by an isometry or contraction, we obtain the following characterisation.

\begin{corollary}
  The following square is a pushout of symmetric monoidal categories:
  % https://q.uiver.app/?q=WzAsNCxbMCwwLCJcXElzb21ldHJ5Il0sWzEsMCwiXFxDb250cmFjdGlvbiJdLFswLDEsIlxcRkhpbGJDUFRQIl0sWzEsMSwiXFxGSGlsYkNQVE4iXSxbMCwyXSxbMCwxXSxbMSwzXSxbMiwzXV0=
  \[\begin{tikzcd}[ampersand replacement=\&]
  	\Isometry \& \Contraction \\
  	\FHilbCPTP \& \FHilbCPTN
  	\arrow[from=1-1, to=2-1]
  	\arrow[from=1-1, to=1-2]
  	\arrow[from=1-2, to=2-2]
  	\arrow[from=2-1, to=2-2]
  \end{tikzcd}\]
\end{corollary}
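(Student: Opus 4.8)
The plan is to obtain this square as an instantiation of Theorem~\ref{thm-pushout} at $\cat{C} = \Contraction$. The first step is to identify the four corners of the abstract square with the four stated categories. The dagger monomorphisms of $\Contraction$ are exactly the morphisms $f$ with $f^\dagger f = \id$, that is, the isometries, so $\DagMon(\Contraction) = \Isometry$ and the inclusion $I \colon \DagMon(\Contraction) \rightarrowtail \Contraction$ is precisely the inclusion $\Isometry \hookrightarrow \Contraction$ sitting on the top edge. For the bottom-left and bottom-right corners I invoke the two equivalences already established: $\Lm(\DagMon(\Contraction)) = \Lm(\Isometry) \simeq \FHilbCPTP$ from \cite{huotstaton:universal}, and $\Lt(\Contraction) \simeq \FHilbCPTN$ from Theorem~\ref{thm-lt-contr}. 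Thus Theorem~\ref{thm-pushout} already exhibits $\Lt(\Contraction)$ as the pushout of the span $\Lm(\Isometry) \xleftarrow{\mathcal{E}} \Isometry \to \Contraction$, and it remains only to transport this statement along the two equivalences.

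The second step is to check that these equivalences are compatible with the structural functors, so that they carry the abstract cocone $(\mathcal{E}_t, I_t)$ to the cocone named in the corollary. Writing $E_1 \colon \Lm(\Isometry) \xrightarrow{\simeq} \FHilbCPTP$ and $E_2 \colon \Lt(\Contraction) \xrightarrow{\simeq} \FHilbCPTN$, I must exhibit three natural isomorphisms: that $E_1 \circ \mathcal{E}$ is the functor $\Isometry \to \FHilbCPTP$ conjugating by an isometry; that $E_2 \circ \mathcal{E}_t$ is the functor $\Contraction \to \FHilbCPTN$ conjugating by a contraction; and that $E_2 \circ I_t$ agrees with $E_1$ followed by the inclusion $\FHilbCPTP \hookrightarrow \FHilbCPTN$. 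Each of these is a direct computation, since both equivalences are identity-on-objects and are given by the Stinespring-type assignment of Theorem~\ref{thm-lt-contr}: chasing a single isometry or contraction through either leg of each square produces the same CP map, up to the canonical coherence isomorphisms inherited from $\Contraction$.

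The final step is to conclude that a square obtained from a pushout square by replacing two of its corners with equivalent categories — here the two bottom corners, one of which is the pushout object itself — is again a pushout. This follows because the universal property of Theorem~\ref{thm-pushout} is stable under pre- and post-composition with equivalences: given any competing cocone $(F_t, F)$ on the stated span, precomposing $F$ with $E_1$ yields a cocone on the abstract span, whose unique mediating functor out of $\Lt(\Contraction)$ then transports along a weak inverse of $E_2$ to a functor out of $\FHilbCPTN$, unique up to natural isomorphism. I expect the main obstacle to be exactly the compatibility check of the second step: one must confirm that the two independently constructed equivalences — the Huot--Staton equivalence in the trace-preserving case and that of Theorem~\ref{thm-lt-contr} in the trace-nonincreasing case — are genuinely intertwined by $I_t$, rather than merely existing side by side. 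Once this coherence is verified, the pushout is read off directly from Theorem~\ref{thm-pushout}, with the (expected and harmless) caveat that the square is a pushout up to equivalence rather than on the nose, since the vertical identifications are equivalences and not isomorphisms of categories.
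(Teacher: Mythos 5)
Your proposal is correct and follows essentially the same route as the paper: identify $\DagMon(\Contraction)$ with $\Isometry$, invoke $\Lm(\Isometry) \simeq \FHilbCPTP$ and $\Lt(\Contraction) \simeq \FHilbCPTN$, and read the square off Theorem~\ref{thm-pushout}. Your second and third steps (checking that the equivalences intertwine the structural functors with the stated inclusions and conjugation functors, and transporting the universal property along them) are exactly the details the paper's own two-sentence proof leaves implicit, so you are if anything more careful than the source.
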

\begin{proof}
  That $\DagMon(\Contraction) \simeq \Isometry$ follows by the fact that every 
  isometry $f$ is contractive (it satisfies $\norm{f(x)} = \norm{x}$, so 
  specifically $\norm{f(x)} \le \norm{x}$) and satisfies $f^\dagger \circ f = 
  \id$. Finally, $\Lm(\Isometry) \simeq \FHilbCPTP$ is known~\cite{huotstaton:universal}, and $\Lt(\Contraction) \simeq \FHilbCPTN$
  by Theorem~\ref{thm-lt-contr}.
\end{proof}

\section{Splitting measurements}
\label{sec:splitting_measurements}

The category $\FHilbCPTN$ does not have coproducts, but it can nevertheless be instructive to see why a given candidate for cotupling fails. Working in $\Lt(\Contraction)$, given two Stinespring dilations $f \colon A \to C \otimes G$ and $g \colon B \to C \otimes G'$, a candidate for the Stinespring dilation of their cotupling is the map $\delta_L^{-1} \circ f \oplus g \colon A \oplus B \to C \otimes (G \oplus G')$. In particular, if we were to try to cotuple the trivial Stinespring dilations of the canonical injections $i_1 \colon A \to (A \oplus B) \otimes I$ and $i_2 \colon B \to (A \oplus B) \otimes I$ in $\Contraction$, the result would be the map
$\delta_L^{-1} \circ i_1 \oplus i_2 : A \oplus B \to (A \oplus B) 
\otimes (I \oplus I) \enspace.$
Notice the non-trivial nature of the ancilla of this dilation, which tracks whether the result came from $i_1$ or $i_2$. This is not simply the identity, as it should be if this were an actual cotupling. Computing, we see that this map acts on block diagonal density matrices on $A \oplus B$ by measuring whether the result falls in $A$ or in $B$, \textit{i.e.}, by the mapping
$$\left(
\begin{array}{c|c}
  X & Y \\ \hline Z & W
\end{array}
\right) \mapsto \left(
\begin{array}{c|c}
  X & 0 \\ \hline 0 & W
\end{array}
\right) \text.$$
Clearly, maps $e$ such as these are both idempotent and \emph{causal} (in that they satisfy $\discard \circ e = \discard$), but interestingly they do not split in $\FHilbCPTN$. However, they \emph{do} have a very natural splitting in the category $\FCstarCPTN$ of finite-dimensional C*-algebras and CPTN maps, via $m : \mathcal{B}(A \oplus B) \to \mathcal{B}(A) \oplus \mathcal{B}(B)$ and $p : \mathcal{B}(A) \oplus \mathcal{B}(B) \to \mathcal{B}(A \oplus B)$ given by
\begin{equation}\label{eq-mp}
m\left(
\begin{array}{c|c}
  X & Y \\ \hline Z & W
\end{array}
\right) = (X, W) \qquad\text{and}\qquad
p(X,W) = \left(
\begin{array}{c|c}
  X & 0 \\ \hline 0 & W
\end{array}
\right)
\end{equation}
as we then have $m \circ p = \id$ and $p \circ m = e$. Indeed, as argued in \cite{coeckeselbytull:tworoads}, this is a defining characteristic of finite-dimensional C*-algebras compared to Hilbert spaces.

\begin{definition}[\cite{coeckeselbytull:tworoads}]
  The idempotent splitting of causal idempotents in $\cat{C}$ (where $\cat{C}$
  is a category with discarding) is the category
  $\SplitM(\cat{C})$ whose
  \begin{itemize}
    \item objects are pairs $(A,e)$ of an object $A$ of $\cat{C}$ and an 
    idempotent $e \colon A \to A$ satisfying $\discard \circ e = \discard$,
    \item morphisms $f \colon (A,e) \to (B,e')$ are morphisms $f \colon A \to
    B$ of $\cat{C}$ satisfying $e' \circ f \circ e = f$,
    \item each identity $(A,e) \to (A,e)$ is $e$, and
    \item composition is as in $\cat{C}$.
  \end{itemize}
\end{definition}

There is an inclusion of $\cat{C}$ in $\SplitM(\cat{C})$ sending objects $A$ to $(A, \id)$ and leaving morphisms unchanged. This construction is well-known to be the free splitting of these idempotents; more abstractly, it is the completion of a category with regards to certain absolute colimits~\cite{pare:abs}. That this constructs finite dimensional C*-algebras out of Hilbert spaces follows by \cite{coeckeselbytull:tworoads}:

\begin{proposition}
  There is a (monoidal, causal) equivalence
  $\SplitM(\FHilbCPTN) \simeq \FCstarCPTN$.
\end{proposition}
\begin{proof}
  It suffices to argue that the objects of $\SplitM(\FHilbCPTN)$ are
  finite-dimensional C*-algebras, which is immediate by Corollary 3.4 of
  \cite{coeckeselbytull:tworoads}.
\end{proof}

\section{Discussion}
\label{sec:discussion}
We have presented a universal construction that constructs the category of finite-dimensional C*-algebras and completely positive trace-nonincreasing maps from the category of finite-dimensional Hilbert spaces and unitaries. Though we have kept to the finite-dimensional case in this paper, there is reason to suspect that many of these results will generalise to infinite dimensions. For example, \emph{Halmos dilation} has a far stronger statement as \emph{Sz.\ Nagy dilation} in the infinite-dimensional case, and the usual Stinespring dilation theorem generalises to the infinite-dimensional case as well.

The key application that we envision for this work is in the design and semantics of quantum programming languages. One application of the $\LRa$-construction is in the quantisation of reversible classical programs. It can be shown that applying the $\LRa$-construction to the category $\FinBij$ of finite sets and bijections yields the category $\FinPInj$ of finite sets and partial injective functions. In particular, this means that the quantisation functor $\FinBij \to \Unitary$ lifts uniquely to a functor $\FinPInj \to \Contraction$. This is interesting since $\FinPInj$ is \emph{the} setting for (finite) reversible classical computing (see \textit{e.g.}~\cite{kaarsgaardaxelsenglueck:jinv,glueckkaarsgaardyokoyama:revsem,glueckkaarsgaardyokoyama:revmeta}), in particular \emph{reversible (classical) flowcharts}~\cite{yokoyamaaxelsenglueck:fundamentals,glueckkaarsgaard:categorical} (with a finite state space). This suggests that $\Contraction$ may be similarly considered as a setting for \emph{reversible quantum flowcharts}, a kind of quantum flowcharts (see also \cite{selinger:qpl}) which eschew measurements in favour of quantum control.

Another application concerns extending the quantum programming language $\mathcal{U}\Pi^\chi_a$ (``yuppie-chi-a'')~\cite{heunenkaarsgaard:qie} with \emph{classical types}. $\mathcal{U}\Pi^\chi_a$ is an effectful extension of $\mathcal{U}\Pi$, a quantum extension to the strongly typed classical reversible programming language $\Pi$~\cite{jamessabry:infeff}. The effectful part of $\mathcal{U}\Pi^\chi_a$ is that it uses the $\Lm$ and $\Ra$ constructions in a very direct way to add measurement as an effect to an otherwise measurement-free language, taking its semantics from $\Unitary$ to $\FHilbCPTP$. Since the causal idempotents described in Section~\ref{sec:splitting_measurements} can all be described as $\mathcal{U}\Pi^\chi_a$ programs, the $\SplitM$-construction can similarly be used very directly to add classical types to this language as a computational effect. This addresses a known shortcoming of the language by allowing a distinction between \textit{e.g.}\ the type of \emph{bits} and the type of \emph{qubits} (as in \textit{e.g.}~\cite{selinger:qpl}), but also allows for a much more fine-grained type-level separation of measurement maps in \emph{how much} they measure.

\paragraph{Acknowledgements}
We are indebted to the anonymous reviewers for their comments and suggestions, to John van der Wetering for a lively discussion at the event, and to Sean Tull for discussions and pointing out the equivalence between splitting measurements and splitting causal idempotents in $\FHilbCPTN{}$.

\bibliographystyle{eptcs}
\bibliography{refs}

\appendix

\section{Deferred proofs}
\begin{repproposition}{prop-lra-zero}
  The additive unit $O$ is a zero object in $\LRa(\cat{C})$.
\end{repproposition}
\begin{proof}
  Define maps $A \to O$ and $O \to A$ as the equivalence classes 
  of the symmetry $\sigma_\oplus \colon A \oplus O \to O \oplus A$ and 
  $\sigma_\oplus \colon O \oplus A \to A \oplus O$. To see that the map $A \to O$ is 
  unique, let $f \colon A \oplus H \to O \oplus G$ be any other morphism of
  $\cat{C}$. Then $[O, \sigma, A] \medeq{LR} [H, f, G]$ can be seen using the graphical language~\cite{heunenvicary:cqt}:
  \ctikzfig{zero}
  That the map $O \to A$ is unique follows analogously, so $O$ is both initial and terminal, \textit{i.e.}, a zero object.
\end{proof}

\begin{repproposition}{prop-lra-dagger}
  When $\cat{C}$ is a dagger rig category, so is $\LRa(\cat{C})$.
\end{repproposition}
\begin{proof}
  For any morphism \([H,f,G] \colon A \to B\) in \(\LRa(\C)\), its dagger is 
  defined to be $$[H,f,G]^\dagger = [G,f^\dagger,H].$$
  First, we need to check this is well-defined, \textit{i.e.}, given \([H,f,G] = [H',f',G']\), verify \([H,f,G]^\dagger = [H',f',G']^\dagger\). This is straightforward, as \(f \medeq{LR} g\) implies \(f^\dagger \medeq{LR} g^\dagger\); to see this, notice that
  \begin{align*}
    f \medeq{L} g &\implies f^\dagger \medeq{R} g^\dagger \\
    f \medeq{R} g &\implies f^\dagger \medeq{L} g^\dagger
  \end{align*}
  by using the dagger of the mediators and, furthermore, \(f \medeq{\id} g\) 
  trivially implies \(f^\dagger \medeq{\id} g^\dagger\).
  Clearly, this dagger is involutive and
  $$
    \id_A^\dagger = [O,\id_{A\oplus O},O]^\dagger = [O,\id_{A\oplus 
    O}^\dagger,O] = \id_A.
  $$
  Checking explicitly that \((g \circ f)^\dagger = f^\dagger \circ g^\dagger\) is more involved, but conceptually trivial: flip the diagram and use the fact that \(\C\) is a dagger category. This works because, in \(\LRa(\C)\) there is symmetry between input and output: the dagger turns hidden parts of the input into a hidden part of the output, and vice versa. That coherence isomorphisms are unitary follows immediately by the fact that they are inherited from $\cat{C}$, where are they unitary by $\cat{C}$ a dagger rig category.
\end{proof}

\begin{repproposition}{prop-LR-eq}
  When $\C$ is a rig category, \(\La(\Ra(\C)) \cong \LRa(\C) \cong \Ra(\La(\C))\).
\end{repproposition}
\begin{proof}
  The fact that \(O\) is initial in \(\LRa(\C)\) together with the universal property of \(\Ra\) implies that the top triangle of the diagram below commutes, for a unique functor \(\Phi\). Then, the fact that \(O\) is also terminal in \(\LRa(\C)\) and the universal property of \(\La\) implies that the bottom triangle also commutes, for a unique functor \(\widehat{\cF}\).
  % https://q.uiver.app/?q=WzAsNCxbMCwwLCJcXEMiXSxbMiwwLCJcXFJbXFxDXSJdLFswLDIsIlxcTFJbXFxDXSJdLFsyLDIsIlxcTFtcXFJbXFxDXV0iXSxbMCwxLCJcXGNEIl0sWzAsMiwiXFxjRiIsMl0sWzEsMiwiXFxQaGkiLDIseyJzdHlsZSI6eyJib2R5Ijp7Im5hbWUiOiJkYXNoZWQifX19XSxbMSwzLCJcXGNFIl0sWzMsMiwiXFx3aWRlaGF0e1xcY0Z9IiwwLHsic3R5bGUiOnsiYm9keSI6eyJuYW1lIjoiZGFzaGVkIn19fV1d
  \[\begin{tikzcd}
    \C && {\Ra(\C)} \\
    \\
    {\LRa(\C)} && {\La(\Ra(\C))}
    \arrow["\cD", from=1-1, to=1-3]
    \arrow["\cF"', from=1-1, to=3-1]
    \arrow["\Phi"', dashed, from=1-3, to=3-1]
    \arrow["\cE", from=1-3, to=3-3]
    \arrow["{\widehat{\cF}}", dashed, from=3-3, to=3-1]
  \end{tikzcd}\]
Moreover, \(O\) is a zero object in \(\La(\Ra(\C))\) and, thus, the universal property of \(\LRa\) implies there is a unique strong monoidal functor \(\widehat{\cE \circ \cD} \colon \LRa(C) \to \La(\Ra(\C))\) such that \(\cE \circ \cD = \widehat{\cE \circ \cD} \circ \cF\). This, together with the commuting diagram above implies:
$$\widehat{\cE \circ \cD} \circ \widehat{\cF} \circ \cE \circ \cD = \widehat{\cE \circ \cD} \circ \Phi \circ \cD = \widehat{\cE \circ \cD} \circ \cF = \cE \circ \cD
$$
which is captured in the outer triangle of the diagram below
  % https://q.uiver.app/?q=WzAsNCxbMCwwLCJcXEMiXSxbMiwwLCJcXFJbXFxDXSJdLFs0LDAsIlxcTFtcXFJbXFxDXV0iXSxbNCwyLCJcXExbXFxSW1xcQ11dIl0sWzAsMSwiXFxjRCJdLFsxLDIsIlxcY0UiXSxbMiwzLCJcXHdpZGVoYXR7XFxjRVxcY0R9IFxcY2lyYyBcXHdpZGVoYXR7XFxjRn0iLDAseyJzdHlsZSI6eyJib2R5Ijp7Im5hbWUiOiJkYXNoZWQifX19XSxbMCwzLCJcXGNFIFxcY2lyYyBcXGNEIiwyXSxbMSwzLCIiLDAseyJzdHlsZSI6eyJib2R5Ijp7Im5hbWUiOiJkYXNoZWQifX19XV0=
  \[\begin{tikzcd}
    \C && {\Ra(\C)} && {\La(\Ra(\C))} \\
    \\
    &&&& {\La(\Ra(\C))}
    \arrow["\cD", from=1-1, to=1-3]
    \arrow["\cE", from=1-3, to=1-5]
    \arrow["{\widehat{\cE \circ \cD} \circ \widehat{\cF}}", dashed, from=1-5, to=3-5]
    \arrow["{\cE \circ \cD}"', from=1-1, to=3-5]
    \arrow[dashed, from=1-3, to=3-5]
  \end{tikzcd}\]
Due to the universal property of \(\Ra\), the functor shown above as a diagonal dashed line is unique, so it must be \(\cE\); then, \(\widehat{\cE\circ\cD} \circ \widehat{\cF} = 1_{\La(\Ra(\C))}\) as both functors make the inner right triangle commute but, by the universal property of \(\La\), there is only one such functor. By a similar argument \(\widehat{\cF} \circ \widehat{\cE \circ \cD} = 1_{\LRa(\C)}\) and hence \(\La(\Ra(\C)) \cong \LRa(\C)\). This is an equivalence of rig categories, as \(\widehat{\cF}\) and \(\widehat{\cE \circ \cD}\) are rig functors by construction.
The same strategy proves \(\Ra(\La(\C)) \cong \LRa(\C)\).
\end{proof}

\begin{reptheorem}{thm-pushout}
  $\Lt(\cat{C})$ is a pushout of $\mathcal{E} \colon \DagMon(\cat{C}) \to
  \Lm(\DagMon(\cat{C}))$ along the inclusion functor $\DagMon(\cat{C})
  \rightarrowtail \cat{C}$ in the category of locally small symmetric monoidal 
  categories and strong monoidal functors.
  % https://q.uiver.app/?q=WzAsNSxbMCwwLCJcXERhZ01vbihcXG1hdGhiZntDfSkiXSxbMSwwLCJcXG1hdGhiZntDfSJdLFswLDEsIlxcTG0oXFxEYWdNb24oXFxtYXRoYmZ7Q30pKSJdLFsxLDEsIkxfdChcXG1hdGhiZntDfSkiXSxbMiwyLCJcXG1hdGhiZntEfSJdLFswLDEsIkkiLDAseyJzdHlsZSI6eyJ0YWlsIjp7Im5hbWUiOiJtb25vIn19fV0sWzAsMiwiXFxtYXRoY2Fse0V9IiwyXSxbMiwzLCJJX3QiLDIseyJzdHlsZSI6eyJ0YWlsIjp7Im5hbWUiOiJtb25vIn19fV0sWzEsMywiXFxtYXRoY2Fse0V9X3QiXSxbMiw0LCJGX3QiLDIseyJjdXJ2ZSI6Mn1dLFsxLDQsIkYiLDAseyJjdXJ2ZSI6LTJ9XSxbMyw0LCJcXGhhdHtGfSIsMCx7InN0eWxlIjp7ImJvZHkiOnsibmFtZSI6ImRvdHRlZCJ9fX1dXQ==
  \[\begin{tikzcd}[ampersand replacement=\&]
  	{\DagMon(\mathbf{C})} \& {\mathbf{C}} \\
  	{\Lm(\DagMon(\mathbf{C}))} \& {\Lt(\mathbf{C})} \\
  	\&\& {\mathbf{D}}
  	\arrow["I", tail, from=1-1, to=1-2]
  	\arrow["{\mathcal{E}}"', from=1-1, to=2-1]
  	\arrow["{I_t}"', tail, from=2-1, to=2-2]
  	\arrow["{\mathcal{E}_t}", from=1-2, to=2-2]
  	\arrow["{F}"', curve={height=12pt}, from=2-1, to=3-3]
  	\arrow["F_t", curve={height=-12pt}, from=1-2, to=3-3]
  	\arrow["{\hat{F}}", dotted, from=2-2, to=3-3]
  \end{tikzcd}\]
\end{reptheorem}
\begin{proof}
  Notice first that the upper square commutes since all functors involved
  are strict monoidal, and $I$ and $I_t$ are merely inclusions behaving as the
  identity on objects and morphisms, while $\mathcal{E}$ is precisely
  $\mathcal{E}_t$ restricted to dagger monomorphisms of $\cat{C}$.
  
  Next, since objects on $\cat{C}$, $\DagMon(\cat{C})$, $\Lt(\cat{C})$, and
  $\Lm(\DagMon(\cat{C}))$ all coincide, $F$ and $F_t$ must agree on objects, so 
  we may
  define $\hat{F}(X) = F(X) = F_t(X)$ on objects, and $\hat{F} \circ I_t = F$
  and $\hat{F} \circ \mathcal{E}_t = F_t$ on objects follows immediately. On
  morphisms we define
  $\hat{F}([f,E]) = F(\pi_1) \circ F_t(f)$. Then
  \begin{align*}
    \hat{F}(I_t([f,E])) & = \hat{F}(I_t(\pi_1 \circ \mathcal{E}(f)))
     = \hat{F}(I_t(\pi_1) \circ I_t(\mathcal{E}(f)))
     = \hat{F}(I_t(\pi_1) \circ \mathcal{E}_t(I(f))) \\
     & = \hat{F}([I(f),E]) = F(\pi_1) \circ F_t(I(f))
     = F(\pi_1) \circ F(\mathcal{E}(f)) \\
     & = F(\pi_1 \circ \mathcal{E}(f)) = F([f,E])
  \end{align*}
  so $\hat{F} \circ I_t = F$ on morphisms as well. For the other triangle,
  \begin{align*}
    \hat{F}(\mathcal{E}_t(f)) & = \hat{F}(\pi_1 \circ \rho^{-1} \circ 
    \mathcal{E}_t(f)) = \hat{F}(\pi_1 \circ \mathcal{E}_t(\rho^{-1}) \circ 
    \mathcal{E}_t(f)) = \hat{F}(\pi_1 \circ \mathcal{E}_t(\rho^{-1} \circ f)) \\
    & = \hat{F}([\rho^{-1} \circ f,I]) = F(\pi_1) \circ F_t(\rho^{-1} \circ f)
    = F(\pi_1) \circ F(\rho^{-1}) \circ F_t(f) \\
    & = F(\pi_1) \circ F_t(I(\rho^{-1})) \circ F_t(f)
    = F(\pi_1) \circ F(\mathcal{E}(\rho^{-1})) \circ F_t(f) \\
    & = F(\pi_1 \circ \mathcal{E}(\rho^{-1})) \circ F_t(f)
    = F([\rho^{-1},I]) \circ F_t(f) = F(\id) \circ F_t(f) = F_t(f)
  \end{align*}
  establishing $\hat{F} \circ \mathcal{E}_t = F_t$ on morphisms. Finally, suppose $G$ satisfies $G \circ I_t = F$ and $G \circ
  \mathcal{E}_t = F_t$. Then
  \begin{align*}
    G([f,E]) = G(I_t(\pi_1) \circ \mathcal{E}_t(f)) = 
    G(I_t(\pi_1)) \circ G(\mathcal{E}_t(f)) = F(\pi_1) \circ F_t(f) = 
    \hat{F}([f,E])
  \end{align*}
  on morphisms, and since $G$ and $\hat{F}$ must agree on objects as well 
  (\textit{e.g.}\ $\hat{F}(X) = F(X) = G(I_t(X)) = G(X)$), we get $G = \hat{F}$. 
  Finally, $\hat{F}$ is strong monoidal since $\mathcal{E}_t$ is strict 
  monoidal and $F_t$ and $\hat{F}$ agree on objects, so $\hat{F}$ may reuse the
  coercions $I \cong F_t(I)$ and $F_t(A \otimes B) \cong F_t(A) \otimes F_t(B)$.
\end{proof}

\end{document}